\newtheoremstyle{propstyle} 
    {2mm}                    
    {1mm}                    
    {\itshape}                   
    {}                           
    {\scshape}                   
    {.}                          
    {.5em}                       
    {}  
\theoremstyle{propstyle}
\newtheorem{proposition}{Proposition}
\theoremstyle{propstyle}
\theoremstyle{propstyle}
\newtheorem{lemma}{Lemma}
\newtheorem{corollary}{Corollary}
\theoremstyle{propstyle}
\theoremstyle{propstyle}
\renewcommand{\paragraph}{%
  \@startsection{paragraph}{4}%
  {\z@}{2ex \@plus 1ex \@minus .2ex}{-1em}%
  {\normalfont\normalsize\bfseries}%
}
\DeclareMathAlphabet\mathbfcal{OMS}{cmsy}{b}{n}
\newcommand{\bc}{\mathbf{c}}
\newcommand{\bd}{\mathbf{d}}
\renewcommand{\bf}{\mathbf{f}}
\newcommand{\bb}{\mathbf{b}}
\newcommand{\bs}{\mathbf{s}}
\newcommand{\by}{\mathbf{y}}
\newcommand{\bw}{\mathbf{w}}
\newcommand{\bz}{\mathbf{z}}
\newcommand{\bY}{\mathbf{Y}}
\newcommand{\bF}{\mathbf{F}}
\newcommand{\bG}{\mathbf{G}}
\newcommand{\bL}{\mathbf{L}}
\newcommand{\bI}{\mathbf{I}}
\newcommand{\bD}{\mathbf{D}}
\newcommand{\bK}{\mathbf{K}}
\newcommand{\bQ}{\mathbf{Q}}
\newcommand{\bC}{\mathbf{C}}
\newcommand{\bfzero}{\mathbf{0}}
\newcommand{\bfmu}{\bm{\mu}}
\newcommand{\bftheta}{\bm{\theta}}
\newcommand{\bfepsilon}{\bm{\epsilon}}
\newcommand{\bfSigma}{\bm{\Sigma}}
\newcommand{\bfLambda}{\bm{\Lambda}}
\newcommand{\diag}{diag}
\newcommand{\GP}{\mathcal{GP}}
\newcommand{\order}{\mathcal{O}}
\newcommand{\normal}{\mathcal{N}}
\newcommand{\nig}{\mathcal{NIG}}
\DeclareMathOperator*{\argmax}{arg\,max}
\newcommand{\domain}{\mathcal{D}}
\newcommand{\map}{\mathcal{T}}
\newcommand{\pmap}{\widetilde{\mathcal{T}}}
\title{Scalable Bayesian transport maps for high-dimensional non-Gaussian spatial fields}
\author{Matthias Katzfuss\thanks{Department of Statistics, Texas A\&M University. Corresponding author: \texttt{katzfuss@gmail.com}} \and Florian Sch{\"a}fer\thanks{School of Computational Science and Engineering, Georgia Institute of Technology}}
\date{}
\begin{document}

\maketitle

\begin{abstract}
A multivariate distribution can be described by a triangular transport map from the target distribution to a simple reference distribution. We propose Bayesian nonparametric inference on the transport map by modeling its components using Gaussian processes. This enables regularization and uncertainty quantification of the map estimation, while still resulting in a closed-form and invertible posterior map. 
We then focus on inferring the distribution of a nonstationary spatial field from a small number of replicates. We develop specific transport-map priors that are highly flexible and are motivated by the behavior of a large class of stochastic processes. Our approach is scalable to high-dimensional distributions due to data-dependent sparsity and parallel computations.
We also discuss extensions, including Dirichlet process mixtures for flexible marginals.
We present numerical results to demonstrate the accuracy, scalability, and usefulness of our methods, including statistical emulation of non-Gaussian climate-model output.
\end{abstract}

{\small\noindent\textbf{Keywords:} climate-model emulation; Dirichlet process mixture; Gaussian process; generative modeling; maximin ordering; nonstationarity}

\section{Introduction \label{sec:intro}}

\paragraph{Motivation}
Inference on a high-dimensional joint distribution based on a relatively small number of replicates is important in many applications.
For example, generative modeling of nonstationary and non-Gaussian spatial distributions is crucial for statistical climate-model emulation \citep[e.g.,][]{Castruccio2014,Nychka2018,Haugen2019}, in ensemble-based data assimilation \citep[e.g.,][]{Houtekamer2016,Katzfuss2015b}, and design studies for new satellite observing systems at NASA using observing system simulation experiments \citep{Errico2013}.

\begin{figure}
\centering
\includegraphics[trim=4mm 34mm 2mm 11mm, clip, page=2,width =.56\linewidth]{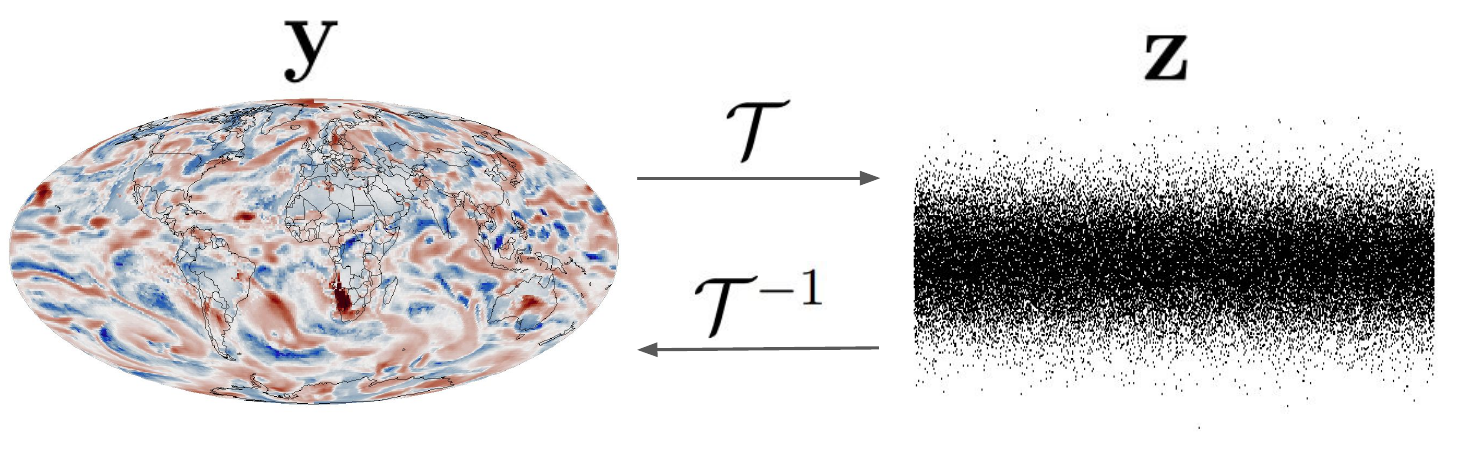}\\ 
\vspace{2mm}
\includegraphics[trim=2mm 60mm 4mm 7mm, clip, page=1,width =.6\linewidth]{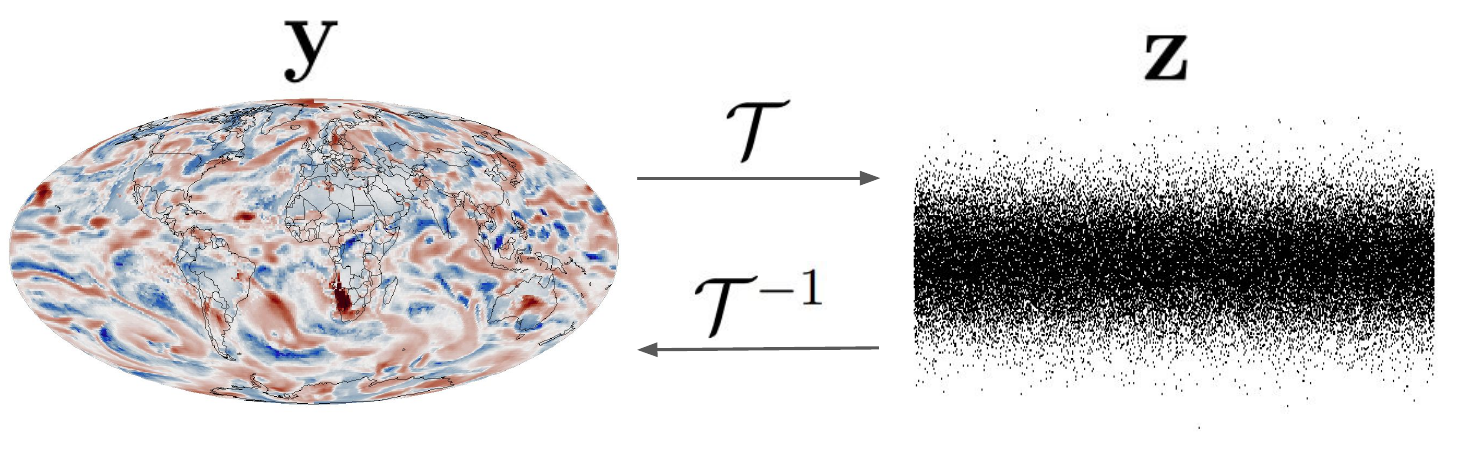} 
\caption{Top panel: Illustration of a transport map $\map$ transforming a (bivariate) non-Gaussian distribution $p(\by)$ to a standard Gaussian distribution $\normal(\bfzero,\bI)$. Bottom: Equivalently, $\map$ converts a realization (here, a spatial field) $\by \sim p(\by)$ to standard Gaussian coefficients $\bz = \map(\by) \sim \normal(\bfzero,\bI)$. Under maximin ordering (Figure \ref{fig:maxmin}), $\bz$ can be viewed as scores corresponding to a nonlinear version of principal components, and they decrease in importance and in corresponding spatial scale from left to right. The spatial field is output from a climate model on a grid of size $N = 288 \times 192 = 55{,}296$; we want to learn $\map$ characterizing the $N$-dimensional distribution based on an ensemble of $n<100$ training samples (see Section \ref{sec:application}).}
\label{fig:illus}
\end{figure}


\paragraph{Transport maps}
A continuous multivariate distribution with any dependence structure can be characterized via a triangular transport map \citep[see][for a review]{Marzouk2016} that transforms the target distribution to a reference distribution (e.g., standard Gaussian), as illustrated in Figure \ref{fig:illus}. For Gaussian target distributions, such a map is linear and given by the Cholesky factor of the precision matrix; non-Gaussian distributions can be obtained by allowing nonlinearities in the map. Given an invertible transport map, it is straightforward to sample from the target distribution and some of its conditionals, or to transform the non-Gaussian data to the reference space, in which simple linear operations such as regression or interpolation can be applied.
Typically, the map is estimated based on training data, often by iteratively expanding a finite-dimensional parameterization of the transport map \citep[e.g.,][]{ElMoselhy2012,Bigoni2016,Marzouk2016,baptista2020adaptive}; subsequent inference is then carried out assuming that the map is known.

\paragraph{Bayesian transport maps}
We propose an approach for Bayesian inference on a transport map that describes a multivariate continuous distribution and is learned from a limited number of samples from the distribution.
We model the map components using nonparametric, conjugate Gaussian-process priors, which probabilistically regularize the map and shrink toward linearity. The resulting generative model is flexible, naturally quantifies uncertainty, and adjusts to the amount of complexity that is discernible from the training data, thus avoiding both over- and under-fitting. The conjugacy results in simple, closed-form inference. 
Instead of assuming Gaussianity for the multivariate target distribution, our approach is equivalent to a series of conditional GP regression problems that together characterize a non-Gaussian target distribution.

\paragraph{Transport maps for spatial fields}
We then focus on learning or emulating structured target distributions corresponding to spatial fields observed at a finite but large number of locations, based on a relatively small number of training replicates.
In this setting, our Bayesian transport maps impose sparsity and regularization motivated by the behavior of diffusion-type processes that are encountered in many environmental applications.
After applying a so-called maximin ordering of the spatial locations, determining the triangular transport map essentially consists of conditional spatial-prediction problems on an increasingly fine scale. We discuss how this scale decay results in conditional near-Gaussianity for a large class of non-Gaussian stochastic processes associated with quasilinear partial differential equations. Hence, our prior distributions are motivated by the behavior of Gaussian fields with Mat\'ern-type covariance, for which the so-called screening effect leads to a decay of influence that motivates sparse transport maps that only consider nearby observations in the spatial prediction problems, corresponding to assumptions of conditional independence. The degree of shrinkage and sparsity are determined by hyperparameters that are inferred from data.
The resulting Bayesian methods require little user input, scale near-linearly in the number of spatial locations, and the main computations are trivially parallel.

\paragraph{Extensions}
We further increase the flexibility in the (continuous) marginal distributions by modeling the GP-regression error terms using Dirichlet process mixtures, which can be fit using a Gibbs sampler. The resulting method lets the data decide the degrees of nonlinearity, nonstationarity, and non-Gaussianity, without manual tuning or model-selection.
We also discuss an extension for settings in which Euclidean distance between the locations is not meaningful or in which variables are not identified by spatial locations (e.g., multivariate spatial processes).

\paragraph{Related existing spatial methods}
Most existing methods for spatial inference are in principle applicable in our emulation setting, but they are often geared toward spatial prediction based on a single training replicate and assume Gaussian processes (GPs) with simple parametric covariance functions \citep[e.g.,][]{Cressie1993,Banerjee2004}. Many extensions to nonstationary \citep[e.g., as reviewed by][]{Risser2016} or nonparametric covariances \citep[e.g.,][]{Huang2011,Choi2013,Porcu2019} have been proposed, but these typically still rely on implicit or explicit assumptions of Gaussianity.
This includes locally parametric methods specifically developed for climate-model emulation \citep{Nychka2018,Wiens2020,Wiens2021} that locally fit anisotropic Mat\'ern covariances in small windows and then combine the local fits into a global model.
For non-Gaussian spatial data, GPs can be transformed or used as latent building blocks \citep[see, e.g.,][and references therein]{Gelfand2016,Xu2017}, but relying on a GP's covariance function limits the types of dependence that can be captured. Parametric non-Gaussian Mat\'ern fields can be constructed using stochastic partial differential equations driven by non-Gaussian noise \citep{Wallin2015,Bolin2020}.
Models for non-Gaussian spatial data can also be built using copulas; for example, \citet{Graler2014} proposed vine copulas for spatial fields with extremal behavior, and the factor copula approach of \citet{Krupskii2018} assumes all locations in a homogeneous spatial region to be affected by a common latent factor.
Many existing non-Gaussian spatial methods are not scalable to large datasets.

\paragraph{Vecchia and extensions}
A popular way to achieve scalability for Gaussian spatial fields with parametric covariances is via the Vecchia approximation \citep[e.g.,][]{Vecchia1988,Stein2004,Datta2016,Katzfuss2017a,Schafer2020}, which implicitly utilizes a linear transport map given by a sparse inverse Cholesky factor. \citet{Kidd2020} proposed a Bayesian approach to infer the Cholesky factor nonparametrically. Our (sparse) nonlinear transport maps can be viewed as a Bayesian, nonparametric, and non-Gaussian generalization of Vecchia approximations.

\paragraph{Generative models in machine learning}
A close relative of transport maps in machine learning are normalizing flows \citep[see][for a review]{kobyzev2020normalizing}, where triangular layers are used to ensure easy evaluation and inversion of likelihood objectives.
Variational autoencoders (VAEs) and generative adversarial networks (GANs) relying on deep neural networks \citep[e.g.,][]{Goodfellow2016} can be highly expressive and have been used for climate-model emulation \citep[e.g.,][]{Ayala2021,Besombes2021}.
\citet{Kovachki2020} designed GANs with triangular generators that allow conditional sampling.
Our approach can be viewed as a Bayesian shallow autoencoder, with the posterior transport map and its inverse acting as the encoder and decoder, respectively. 
In contrast to our proposed method, deep-learning approaches typically require massive training data, can be expensive to train, and are often highly sensitive to tuning-parameter and network-architecture choices \citep[e.g.,][]{Arjovsky2017,Hestness2017,Mescheder2018}.
Hence, in many low-data applications such approaches are only useful when paired with laborious and application-specific techniques, such as data augmentation, transfer learning, or advances in physics-informed machine learning \citep[e.g.,][]{Kashinath2021}.

\paragraph{Outline}
In Section \ref{sec:bayesian}, we develop Bayesian transport maps. In Section \ref{sec:spatial}, we consider the special case of high-dimensional spatial distributions. In Section \ref{sec:dpm}, we discuss extensions to non-Gaussian errors using Dirichlet process mixtures. Sections \ref{sec:simstudy} and \ref{sec:application} provide comparisons and applications to simulated data and climate-model output, respectively. Section \ref{sec:conclusions} concludes and discusses future work. Appendices \ref{app:proofs}--\ref{app:vae} contain proofs and further details. Fully automated implementations of our methods, along with code to reproduce all results, are available at \url{https://github.com/katzfuss-group/BaTraMaSpa}.

\section{Bayesian transport maps\label{sec:bayesian}}

\subsection{Transport maps and regression\label{sec:regression}}

Consider a continuous random vector $\by = (y_1,\ldots,y_N)^\top$, for example describing a spatial field at $N$ locations as in Figure \ref{fig:precdata}. For simplicity, assume that $\by$ has been centered to have mean zero.

For a multivariate Gaussian distribution, $\by \sim \normal_N(\bfzero,\bfSigma)$ with $\bfSigma^{-1} = \bL^\top\bL$, the (lower-triangular) Cholesky factor $\bL$ represents a transformation to a standard normal: $\bz = \bL \by \sim \normal_N(\bfzero,\bI_N)$. As a natural extension, we can characterize any continuous $N$-variate distribution $p(\by)$ by a potentially nonlinear transport map $\map: \mathbb{R}^N \rightarrow \mathbb{R}^N$ \citep[][]{Villani2009}, such that $\bz = \map(\by) \sim \normal_N(\bfzero,\bI_N)$ for $\by \sim p(\by)$. 
Like $\bL$, we can assume without loss of generality that the transport map $\map$ is lower-triangular \citep{Rosenblatt1952,Carlier2009},
\begin{equation}
\map(\by) = \begin{bmatrix*}[l] \map_1(y_1) \\ \map_2(y_1,y_2) \\ ~\,\vdots \\ \map_N(y_1,y_2,\ldots,y_N) \end{bmatrix*}, \label{eq:map}
\end{equation}
where each $\map_i(\by_{1:i})$ with $\by_{1:i} = (y_1,\ldots,y_i)^\top$ is an increasing function of its $i$th argument to ensure that $\map$ is invertible and implies a proper density $p(\by)$.
Letting $\normal(x|\mu,\sigma^2)$ denote a Gaussian density with parameters $\mu$ and $\sigma^2$ evaluated at $x$, we then have
\begin{equation}
    \label{eq:mapdist}
\textstyle p(\by) = p_\bz\big(\map(\by)\big) \, |\text{det} \nabla \map| = \prod_{i=1}^N \big( \, \normal(\map_i(\by_{1:i})|0,1) \, \big|\frac{\partial \map_i(\by_{1:i})}{\partial y_i}\big| \, \big),
\end{equation}
as the triangular $\map$ also implies a triangular $\nabla \map = ( \frac{\partial \map_i(\by_{1:i})}{\partial y_j} )_{i,j=1,\ldots,N}$.

Throughout, we assume each $\map_i$ to be linearly additive in its $i$th argument,
\begin{equation}
    \label{eq:condmap}
\map_i(\by_{1:i}) = (y_i - f_i(\by_{1:i-1}))/d_i, \qquad i=1,\ldots,N,
\end{equation}
for some $d_i \in \mathbb{R}^+$, $f_i: \mathbb{R}^{i-1} \rightarrow \mathbb{R}$ for $i=2,\ldots,N$, and $f_i(\by_{1:i-1}) \equiv 0$ for $i=1$.
Then, $\partial_i \map_i(\by_{1:i}) = 1/d_i > 0$, as required.
Using \eqref{eq:mapdist}, it is easy to show that
\begin{equation}
\label{eq:map2norm}
   \textstyle p(\by) \propto  \prod_{i=1}^N \big( \exp(-\frac{1}{2d_i^2}(y_i - f_i(\by_{1:i-1}))^2) \, \frac{1}{d_i} \big) \propto \prod_{i=1}^N \normal(y_i|f_i(\by_{1:i-1}),d_i^2).
\end{equation}
Thus, the transport-map approach has turned the difficult problem of inferring the $N$-variate distribution of $\by$ into $N$ independent regressions of $y_i$ on $\by_{1:i-1}$ of the form
\begin{equation}
y_i = f_i(\by_{1:i-1}) + \epsilon_i, \quad \epsilon_i \sim \normal(0,d_i^2), \qquad i=1,\ldots,N.  \label{eq:regi}
\end{equation}

Sparsity in the map $\map$ corresponds to conditional independence in the joint distribution $p(\by)$ \citep[cf.][]{Spantini2018}. Specifically, if we assume $f_i(\by_{1:i-1}) = f_i(\by_{c_i})$ for a subset $c_i \subset \{1,\ldots,i-1\}$, then $\map$ is sparse in that $\map_i$ only depends on $y_j$ if $j \in c_i$ (or if $j=i$). Making such a sparsity assumption for $i=2,\ldots,N$ (and setting $\by_{c_1} = \emptyset$), we have from \eqref{eq:map2norm} that $p(\by) = \prod_{i=1}^N p(y_i|\by_{c_i})$, meaning that $y_i$ is independent of $\{y_j: j \notin c_i, j<i\}$ conditional on $\by_{c_i}$. 
We will exploit this sparsity for computational gain for inferring large non-Gaussian spatial fields in Section \ref{sec:spatial}.

\subsection{Modeling the map functions using Gaussian processes\label{sec:priors}}

In the existing transport-map literature \citep[e.g.,][]{Marzouk2016}, $f_i: \mathbb{R}^{i-1} \rightarrow \mathbb{R}$ and $d_i \in \mathbb{R}^+$ in \eqref{eq:condmap}--\eqref{eq:regi} are often assumed to have parametric form, whose parameters are estimated and then assumed known. 
Instead, we here assume a flexible, nonparametric prior on the map $\map$ by specifying independent conjugate Gaussian-process-inverse-Gamma priors for the $f_i$ and $d_i^2$. These prior assumptions induce prior distributions on the map components $\map_i$ in \eqref{eq:condmap}, and thus on the entire map $\map$ in \eqref{eq:map}.

Specifically, for the ``noise'' variances $d_i^2$, we assume inverse-Gamma distributions, 
\begin{equation}
\label{eq:dprior}
    d_i^2 \stackrel{ind.}{\sim} \mathcal{IG}(\alpha_i,\beta_i), \qquad \text{with } \alpha_i>1, \; \beta_i > 0, \qquad i=1,\ldots,N.
\end{equation}    
Conditional on $d_i^2$, each function $f_i$ is modeled as a Gaussian process (GP) with inputs $\by_{1:i-1}$,
\begin{equation}
    \label{eq:gp}
f_i | d_i \stackrel{ind.}{\sim} \GP(0,d_i^2 K_i), \qquad i=1,\ldots,N,
\end{equation}
where $K_i(\cdot,\cdot) = C_i(\cdot,\cdot)/E(d_i^2)$, $E(d_i^2) = \beta_i/(\alpha_i -1)$,
\begin{equation}
    \label{eq:kernel}
C_i(\by_{1:i-1},\by_{1:i-1}') = \by_{1:i-1}^\top \bQ_i \by_{1:i-1}' + \sigma^2_i \, \rho_i(\by_{1:i-1},\by_{1:i-1}'), \qquad i=1,\ldots,N,
\end{equation}
$\sigma_i \in \mathbb{R}^+_0$, and $\rho_i$ is a positive-definite correlation function such that $\rho_i(\by_{1:i-1},\by_{1:i-1}) =1$.
This prior on $f_i$ is motivated by considering $\tilde f_i|\bb_i \sim \GP(\bb_i^\top(\cdot),\sigma^2_i \rho_i(\cdot,\cdot))$ with inputs $\by_{1:i-1}$, where $\bb_i \sim \normal(\bfzero,\bQ_i)$. Integrating out $\bb_i$, we obtain $\tilde f_i \sim \GP(0,C_i)$ with $C_i$ as in \eqref{eq:kernel}, and hence $f_i = (d_i/\sqrt{E(d_i^2)})\tilde f_i$ as in \eqref{eq:gp}.
The degree of nonlinearity of $f_i$ is determined by $\sigma_i^2$; if $\sigma_i^2=0$, then $f_i$ is a linear function of $\by_{1:i-1}$.
The prior distributions (i.e., $\alpha_i$, $\beta_i$, $C_i$) may depend on hyperparameters $\bftheta$; see Section \ref{sec:hyper} for more details.

\subsection{The posterior map\label{sec:invertiblemap}}

Now assume that we have observed $n$ independent training samples $\by^{(1)},\ldots,\by^{(n)}$ from the distribution in Section \ref{sec:regression} conditional on $\bf=(f_1,\ldots,f_N)$ and $\bd=(d_1,\ldots,d_N)$, such that 
$\by^{(j)} \stackrel{i.i.d.}{\sim} p(\by | \bf,\bd)$ with 
$\map(\by^{(j)}) \, | \, \bf,\bd \, \sim \normal_N(\bfzero,\bI_N)$, $j=1,\ldots,n$.
We combine the samples into an $n \times N$ data matrix $\bY$ whose $j$th row is given by $\by^{(j)}$. Then, for the regression in \eqref{eq:regi}, the responses $\by_i$ and the covariates $\bY_{1:i-1}$ are given by the $i$th and the first $i-1$ columns of $\bY$, respectively. Below, let $\by^\star$ denote a new observation sampled from the same distribution, $\by^\star \sim p(\by | \bf,\bd)$, independently of $\bY$.

Based on the prior distribution for $\bf$ and $\bd$ in Section \ref{sec:priors}, we can now determine the posterior map $\pmap$ learned from the training data $\bY$, with $\bf$ and $\bd$ integrated out. This map is available in closed form and invertible:
\begin{proposition}
\label{prop:maps}
The transport map $\pmap$ from $\by^\star \sim p(\by|\bY)$ to 
$\bz^\star = \pmap(\by^\star) \sim \normal_N(\bfzero,\bI_N)$ is a triangular map with components
\begin{equation}
z_i^\star = \pmap_i(y_1^\star,\ldots,y_i^\star) = \Phi^{-1}\big( F_{2\tilde\alpha_i}\big( \hat d_i^{-1} (v_i(\by^\star_{1:i-1})+1)^{-1/2}(y_i^\star - \hat f_i(\by^\star_{1:i-1})) \big)\big), \quad i=1,\ldots,N, \label{eq:singlemap}
\end{equation}
where
$\tilde\alpha_i = \alpha_i + n/2$, 
$\tilde\beta_i = \beta_i + \by_i{}^\top \bG_i^{-1} \by_i/2$,
$\hat d_i^2=\tilde\beta_i/\tilde\alpha_i$,
$\bG_i = \bK_i + \bI_n$, $\bK_i= K_i(\bY_{1:i-1},\bY_{1:i-1}) =\big(K_i(\by_{1:i-1}^{(j)},\by_{1:i-1}^{(l)}) \big)_{j,l=1,\ldots,n}$,
\begin{align}
\hat f_i(\by^\star_{1:i-1}) & = K_i(\by^\star_{1:i-1},\bY_{1:i-1})\bG_i^{-1}\by_i, \label{eq:gppredmean}\\ 
v_i(\by^\star_{1:i-1}) & = K_i(\by^\star_{1:i-1},\by^\star_{1:i-1}) - K_i(\by^\star_{1:i-1},\bY_{1:i-1})\bG_i^{-1} K_i(\bY_{1:i-1},\by^\star_{1:i-1}), \label{eq:gppredvar}
\end{align}
for $i=2,\ldots,N$, $\hat f_1 = v_1 = 0$ for $i=1$, and $\Phi$ and $F_{\kappa}$ denote the cumulative distribution functions of the standard normal and the $t$ distribution with $\kappa$ degrees of freedom, respectively.
The inverse map $\pmap^{-1}$ can be evaluated at a given $\bz^\star$ by solving the nonlinear triangular system $\pmap(\by^\star) =\bz^\star$ for $\by^\star$; because $\pmap$ is triangular, the solution can be expressed recursively as:
\begin{equation}
y_i^\star = \hat f_i(\by_{1:i-1}^\star) + F_{2\tilde\alpha_i}^{-1}(\Phi(z_i^\star))\, \hat d_i (v_i(\by_{1:i-1}^\star)+1)^{1/2}, \quad i=1,\ldots,N. \label{eq:invmap}
\end{equation}
\end{proposition}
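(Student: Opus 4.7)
The plan is to exploit the triangular structure of the map together with the conditional conjugacy of the Gaussian--inverse-Gamma prior, proceeding component-by-component. First, I would factor the posterior predictive as $p(\by^\star|\bY) = \prod_{i=1}^N p(y_i^\star|\by^\star_{1:i-1},\bY)$, and observe that the $i$-th factor depends on $\bY$ only through $(\bY_{1:i-1},\by_i)$: the likelihood \eqref{eq:map2norm} factorizes over $i$ and the priors \eqref{eq:dprior}--\eqref{eq:gp} are independent across $i$, so the posterior $p(\bf,\bd|\bY)$ decomposes into $N$ independent posteriors, one per regression, and each predictive conditional can be handled in isolation.

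Next, I would derive each conditional predictive in closed form. Given $d_i^2$, regression $i$ is a standard Bayesian GP regression with prior $f_i \sim \GP(0,d_i^2 K_i)$ and Gaussian noise of variance $d_i^2$, so $\by_i | d_i^2 \sim \normal_n(\bfzero, d_i^2 \bG_i)$ and standard Gaussian conditioning yields
\begin{equation*}
y_i^\star \,|\, d_i^2,\bY \,\sim\, \normal\big(\hat f_i(\by^\star_{1:i-1}),\, d_i^2(v_i(\by^\star_{1:i-1})+1)\big),
\end{equation*}
where the $d_i^2$ factor cancels in the posterior mean but survives in the posterior variance, producing exactly \eqref{eq:gppredmean}--\eqref{eq:gppredvar}. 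Conjugacy of the inverse-Gamma prior with the Gaussian likelihood for $\by_i|d_i^2$ yields $d_i^2|\bY \sim \mathcal{IG}(\tilde\alpha_i,\tilde\beta_i)$. Marginalizing $d_i^2$ then turns the conditional Gaussian into a non-standardized Student-$t$ with $2\tilde\alpha_i$ degrees of freedom, location $\hat f_i(\by^\star_{1:i-1})$, and squared scale $\hat d_i^2(v_i(\by^\star_{1:i-1})+1)$, via the standard identity that a Gaussian with inverse-Gamma-mixed variance is Student-$t$.

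I would then apply the probability integral transform. By the previous step, the standardized residual $R_i := \hat d_i^{-1}(v_i(\by^\star_{1:i-1})+1)^{-1/2}(y_i^\star - \hat f_i(\by^\star_{1:i-1}))$ has a standard $t_{2\tilde\alpha_i}$ distribution conditional on $(\by^\star_{1:i-1},\bY)$, so $U_i := F_{2\tilde\alpha_i}(R_i)$ is uniform on $(0,1)$ and, crucially, independent of $(\by^\star_{1:i-1},\bY)$. Iterating the argument in $i$, the $U_i$ are i.i.d.\ uniform given $\bY$, hence $z_i^\star = \Phi^{-1}(U_i)$ are i.i.d.\ $\normal(0,1)$ and $\bz^\star|\bY \sim \normal_N(\bfzero,\bI_N)$, as required.

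Finally, invertibility is immediate from triangularity: each $\pmap_i$ in \eqref{eq:singlemap} is strictly increasing in $y_i^\star$, since the affine standardization, $F_{2\tilde\alpha_i}$, and $\Phi^{-1}$ are all monotone increasing; so the triangular system $\pmap(\by^\star)=\bz^\star$ can be solved recursively, and algebraically rearranging \eqref{eq:singlemap} for $y_i^\star$ produces \eqref{eq:invmap} directly. The only substantive calculation is the Student-$t$ marginalization in the second step; since this is a textbook conjugate normal--inverse-Gamma / GP regression computation, I expect the main obstacle to be careful bookkeeping of the $d_i^2$ scaling conventions (in particular, the cancellation pattern in $\hat f_i$ and $v_i$) rather than any deep technical difficulty.
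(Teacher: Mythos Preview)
Your proposal is correct and follows essentially the same route as the paper's proof: factor the posterior and posterior predictive across $i$ using the independent NIG structure, obtain the conditional predictive $y_i^\star \mid \by^\star_{1:i-1},\bY,d_i^2$ by standard GP regression, marginalize out $d_i^2 \mid \bY \sim \mathcal{IG}(\tilde\alpha_i,\tilde\beta_i)$ to get a Student-$t$ conditional, and then apply the probability integral transform and recursive inversion. The only cosmetic difference is that the paper writes out the full NIG posterior $p(\bF,\bd|\bY)$ explicitly before integrating, whereas you go directly to the predictive; your additional remarks on monotonicity and on the Rosenblatt-type independence of the $U_i$ are correct and slightly more explicit than the paper's version.
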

All proofs are provided in Appendix \ref{app:proofs}. 
We can write the prior map in a similar form, but this is only useful in the case of highly informative priors.

Determining $\pmap_i$ requires $\order(n^3 + i n^2)$ time, mostly for computing and decomposing the $n \times n$ matrix $\bG_i$, for each $i=1,\ldots,N$. However, note that the $N$ rows or components of $\pmap$ can be computed completely in parallel, as in the optimization-based transport-map estimation reviewed in \citet{Marzouk2016}. Each application of the transport map or its inverse then consists of the GP prediction in \eqref{eq:gppredmean}--\eqref{eq:gppredvar} and only requires $\order(n^2 + in)$ time for $i=1,\ldots,N$, but the inverse map is evaluated recursively (i.e., not in parallel).

In contrast to existing transport-map approaches, our approach is Bayesian and naturally quantifies uncertainty in the nonlinear transport functions. The GP priors on the $f_i$ automatically adapt to the amount of information available, only resulting in strongly nonlinear function estimates when supplied the requisite evidence by the data.
If $n$ is increasing, then $\tilde\alpha_i$ increases, $F_{2\tilde\alpha_i}$ converges to $\Phi$, and $v_i(\by^\star_{1:i-1})$ typically converges to zero, and so the map components simplify to
\begin{equation}
\label{eq:detmap}
\pmap_i(y_1^\star,\ldots,y_i^\star) = (y_i^\star - \hat f_i(\by_{1:i-1}^\star))/\hat d_i  \quad \text{and} \quad  y_i^\star = \hat f_i(\by_{1:i-1}^\star) + z_i \hat d_i.
\end{equation}
When employed for finite $n$, this simplified version of the map ignores posterior uncertainty in $\bf$ and $\bd$ and instead relies on the point estimates $\hat f_i(\by_{1:i-1})$ and $\hat d_i^2$. 
If we further assume that $\sigma_i=0$ in \eqref{eq:kernel} for all $i=1,\ldots,N$, then all $f_i$ and all $\pmap_i$ become linear functions; we can think of the resulting linear map $\pmap(\by^\star) = \bL^\top \by^\star$ as an inverse Cholesky factor, in the sense that $\by^\star |\bY \sim \normal(\bfzero,\bfLambda^{-1})$ with $\bfLambda = \bL\bL^\top$.

Transport maps can be used for a variety of purposes. For example, we can obtain new samples $\by^\star$ from the posterior predictive distribution $p(\by|\bY)$ by sampling $\bz^\star \sim \normal_N(\bfzero,\bI_N)$ and computing $\by^\star=\pmap^{-1}(\bz^\star)$ using \eqref{eq:invmap}.
The map $\pmap$ in \eqref{eq:singlemap} provides a transformation from a non-Gaussian vector $\by^\star$ to the standard Gaussian $\bz^\star = \pmap(\by^\star)$; we call $\bz^\star=(z_1^\star,\ldots,z_N^\star)$ the map coefficients corresponding to $\by^\star$ (see Figure \ref{fig:illus} for an illustration).
Because the nonlinear dependencies have been removed, many operations are more meaningful on $\bz^\star$ than on $\by^\star$, including linear regressions, translations using linear shifts, and quantifying similarity using inner products. We can also detect inadequacies of the map $\pmap$ for describing the target distribution by examining the degree of non-Gaussianity and dependence in $\bz^\star$.
These uses of transport maps will be considered further in Section \ref{sec:spatialinference}.

\subsection{Hyperparameters\label{sec:hyper}}

The prior distributions on the $f_i$ and $d_i$ in Section \ref{sec:priors} 
may depend on unknown hyperparameters $\bftheta$. For example, by making inference on hyperparameters in the $\sigma_i$ in \eqref{eq:kernel}, we can let the data decide the degree of nonlinearity in the map and thus the non-Gaussianity in the resulting joint target distribution.
We can write in closed form the integrated likelihood $p(\bY)$, where $\bf$ and $\bd$ have been integrated out. 

\begin{proposition}
\label{prop:lik}
The integrated likelihood is
\begin{equation}
    p(\bY) \textstyle \propto \prod_{i=1}^N \big( \, |\bG_i|^{-1/2} \times ({\beta_i^{\alpha_i}}/{\tilde\beta_i^{\tilde\alpha_i}}) \times {\Gamma(\tilde\alpha_i)}/{\Gamma(\alpha_i)} \, \big),
    \label{eq:intlik}
\end{equation}
where $\Gamma(\cdot)$ denotes the gamma function, and $\tilde\alpha_i$, $\tilde\beta_i$, $\bG_i$ are defined in Proposition \ref{prop:maps}.
\end{proposition}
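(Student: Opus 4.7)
The plan is to exploit the conditional-independence structure of the model and reduce the marginal likelihood to $N$ parallel conjugate Gaussian--inverse-Gamma integrals. I would factor $p(\bY)$ using the chain rule across columns,
\[
p(\bY) = \prod_{i=1}^N p(\by_i \mid \bY_{1:i-1}),
\]
and observe that, by the regression form \eqref{eq:regi}, the $i$-th factor depends on $(\bf,\bd)$ only through $(f_i,d_i^2)$, whose priors \eqref{eq:dprior}--\eqref{eq:gp} are independent across $i$. Thus it suffices to compute each factor separately by integrating out $(f_i, d_i^2)$.

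Second, for each $i$ I would collapse the GP prior to its finite-dimensional marginal at the training inputs. Letting $\bK_i$ denote the $n\times n$ matrix with entries $K_i(\by_{1:i-1}^{(j)}, \by_{1:i-1}^{(k)})$, \eqref{eq:gp} gives $\bf_i \mid d_i^2 \sim \normal_n(\bfzero, d_i^2 \bK_i)$ for the vector $\bf_i = (f_i(\by_{1:i-1}^{(1)}), \ldots, f_i(\by_{1:i-1}^{(n)}))^\top$. Combining with the conditionally Gaussian regression residuals $\by_i \mid \bf_i, d_i^2 \sim \normal_n(\bf_i, d_i^2 \bI_n)$ and marginalizing out $\bf_i$ yields the standard conjugate Gaussian result
\[
\by_i \mid d_i^2, \bY_{1:i-1} \sim \normal_n(\bfzero, d_i^2 \bG_i), \qquad \bG_i = \bK_i + \bI_n.
\]

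Third, I would integrate out $d_i^2$ against its $\mathcal{IG}(\alpha_i, \beta_i)$ prior. Multiplying the Gaussian density above (whose exponent is $-\by_i^\top \bG_i^{-1} \by_i/(2 d_i^2)$ and which contributes a factor $(d_i^2)^{-n/2}$) by the inverse-Gamma density proportional to $(d_i^2)^{-\alpha_i - 1}\exp(-\beta_i/d_i^2)$, the integrand in $d_i^2$ is an unnormalized $\mathcal{IG}(\tilde\alpha_i, \tilde\beta_i)$, and the integral equals $\Gamma(\tilde\alpha_i)\tilde\beta_i^{-\tilde\alpha_i}$. Collecting the Gaussian and inverse-Gamma normalizers gives
\[
p(\by_i \mid \bY_{1:i-1}) = (2\pi)^{-n/2} |\bG_i|^{-1/2} \, \frac{\beta_i^{\alpha_i}}{\Gamma(\alpha_i)} \, \frac{\Gamma(\tilde\alpha_i)}{\tilde\beta_i^{\tilde\alpha_i}},
\]
and multiplying over $i$ produces \eqref{eq:intlik} after absorbing the $\bY$- and $\bftheta$-free factor $(2\pi)^{-Nn/2}$ into the proportionality sign.

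Nothing here is technically deep; the only bookkeeping subtlety is that the kernel matrix $\bK_i$ is evaluated at the random covariates $\bY_{1:i-1}$. The prior on $f_i \mid d_i$ is independent of $\bY_{1:i-1}$ under the model, which legitimizes treating $\bY_{1:i-1}$ as fixed within the $i$-th conditional factor and letting the chain-rule factorization combine cleanly with the conjugate calculations.
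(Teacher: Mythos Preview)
Your proof is correct and follows essentially the same route as the paper: factor $p(\bY)$ via the chain rule, marginalize the GP to get $\by_i \mid d_i^2,\bY_{1:i-1} \sim \normal_n(\bfzero,d_i^2\bG_i)$, and then integrate out $d_i^2$ against its inverse-Gamma prior. The only cosmetic difference is that the paper names the resulting marginal $\by_i \mid \bY_{1:i-1}$ as a multivariate $t_{2\alpha_i}(\bfzero,\tfrac{\beta_i}{\alpha_i}\bG_i)$ and then simplifies its density, whereas you compute the Gaussian--inverse-Gamma integral directly; both yield the same expression.
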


Now denote by $p_{\bftheta}(\bY)$ the integrated likelihood $p(\bY)$ computed based on a particular value $\bftheta$ of the hyperparameters. There are two main possibilities for inference on $\bftheta$.
First, an empirical Bayesian approach consists of estimating $\bftheta$ by the value that maximizes $\log p_{\bftheta}(\bY)$, and then regarding $\bftheta$ as fixed and known. As $\log p_{\bftheta}(\bY)$ is a sum of $N$ simple terms, it is straightforward to optimize this function using stochastic gradient ascent based on automatic differentiation.
Second, we can carry out fully Bayesian inference by specifying a prior $p(\bftheta)$, and sampling $\bftheta$ from its posterior distribution $p(\bftheta|\bY) \propto p_{\bftheta}(\bY) p(\bftheta)$ using Metropolis-Hastings; subsequent inference then relies on these posterior draws.

For our numerical results, we employed the empirical Bayesian approach, because it is faster and preserves the closed-form map properties in Section \ref{sec:invertiblemap}. In exploratory numerical experiments, we observed no significant decrease in inferential accuracy relative to the fully Bayesian approach, likely due to working with a small number of hyperparameters in $\bftheta$.

\section{Bayesian transport maps for large spatial fields\label{sec:spatial}}

Now assume that $\by = (y_1,\ldots,y_N)^\top$ consists of spatial observations or computer-model output at spatial locations $\bs_1,\ldots,\bs_N$ in a region or domain $\domain \subset \mathbb{R}^\text{dim}$.
We assume Bayesian transport maps as in Section \ref{sec:regression}, with regressions of the form \eqref{eq:regi} in $(i-1)$-dimensional space for $i=1,\ldots,N$. As $N$ is very large in many relevant applications, we will specify priors distributions of the form described in Section \ref{sec:priors} that induce substantial regularization and sparsity, as a function of hyperparameters $\bftheta = (\theta_{\sigma,1},\theta_{\sigma,2},\theta_{d,1},\theta_{d,2},\theta_\gamma,\theta_q)$ to be introduced in Sections \ref{sec:prior_s}--\ref{sec:prior_f}.

\subsection{Maximin ordering and nearest neighbors\label{sec:maximin}}

\begin{figure}
\centering
	\begin{subfigure}{.24\textwidth}
	\centering
 	\includegraphics[width =.99\linewidth]{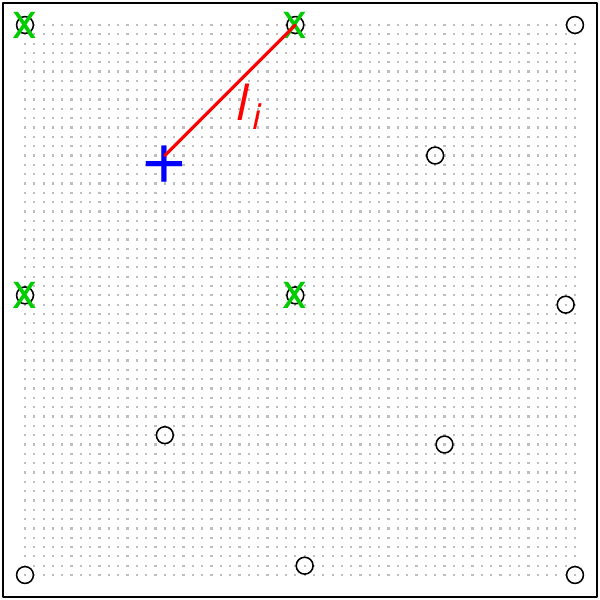}
	\caption{$i=13$}
	\label{fig:mm1}
	\end{subfigure}%
\hfill
\centering
	\begin{subfigure}{.24\textwidth}
	\centering
 	\includegraphics[width =.99\linewidth]{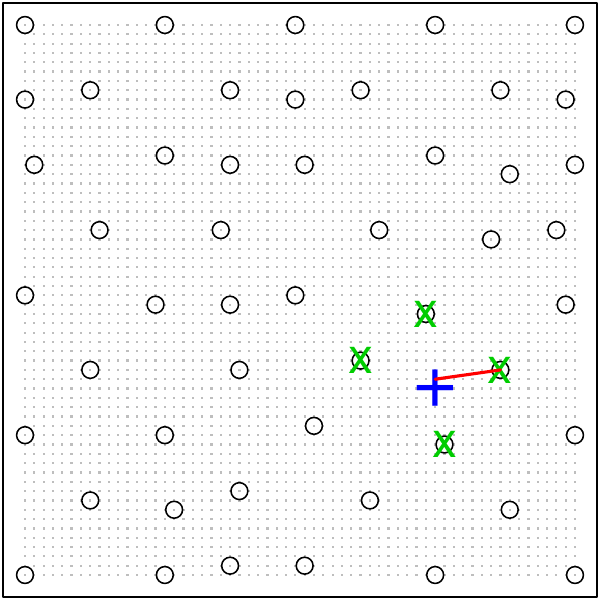}
	\caption{$i=51$}
	\label{fig:mm2}
	\end{subfigure}%
\hfill
\centering
	\begin{subfigure}{.24\textwidth}
	\centering
 	\includegraphics[width =.99\linewidth]{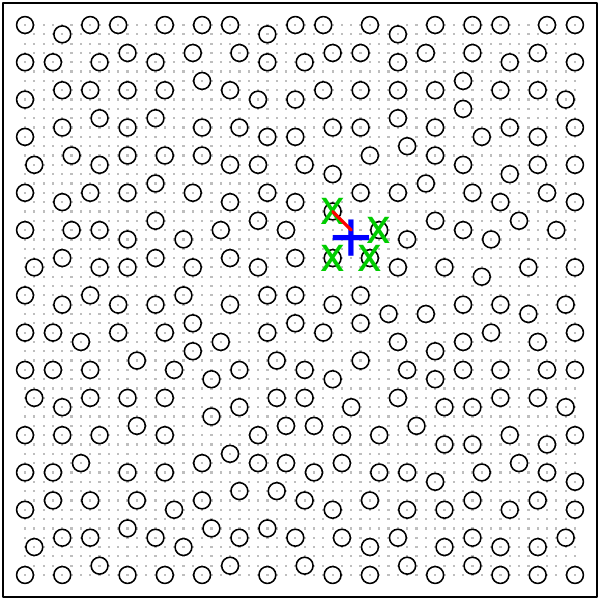}
	\caption{$i=290$}
	\label{fig:mm3}
	\end{subfigure}%
\hfill
	\begin{subfigure}{.24\textwidth}
	\centering
	\includegraphics[width =.99\linewidth]{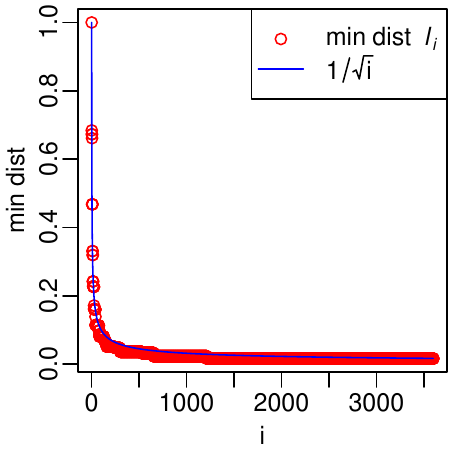}
	\caption{scale decay}
	\label{fig:scale}
	\end{subfigure}%
  \caption{Maximin ordering (Section \ref{sec:maximin}) for locations on a grid (small gray points) of size $N=60 \times 60 = 3{,}600$ on a unit square, $[0,1]^\text{dim}$ with $\text{dim}=2$. (a)--(c): The $i$th ordered location (${\color{blue}+}$), the previous $i-1$ locations (${\color{black}\circ}$), including the nearest $m=4$ neighbors (${\color{green}\mathbf{x}}$) and the distance $\ell_i$ to the nearest neighbor ({\color{red}---}). (d): For $i=1,\ldots,N$, the length scales (i.e., minimum distances) decay as $\ell_i = i^{-1/\text{dim}}$.}
\label{fig:maxmin}
\end{figure}

A triangular map $\map(\by)$ as in \eqref{eq:map} depends on the ordering of the variables $y_1,\ldots,y_N$.
We assume a maximum-minimum-distance (maximin) ordering of the corresponding locations $\bs_1,\ldots,\bs_N$ (see Figure \ref{fig:maxmin}), in which we sequentially choose each location to maximize the minimum distance to all previously ordered locations.
Specifically, the first index $i_1$ is chosen arbitrarily (e.g., $i_1=1$), and then the subsequent indices are selected as
$
i_j = \argmax_{i \, \notin \, \mathcal{I}_{j}} \,\, \min_{j \, \in \, \mathcal{I}_{j}} \|\bs_i - \bs_j\|
$
for $j = 2, \ldots , N$, where $\mathcal{I}_{j} = \{i_1 , \ldots , i_{j-1}\}$. For notational simplicity, we assume throughout that $\by=(y_1,\ldots,y_N)$ follows maximin ordering (i.e., $y_j = y_{i_j}$).
Define $c_i(k)$ as the index of the $k$th nearest (previously ordered) neighbor of the $i$th location (and so $\bs_{c_i(1)},\ldots,\bs_{c_i(4)}$ are indicated by ${\color{green}\mathbf{x}}$ in Figure \ref{fig:maxmin}). 

The maximin ordering can be interpreted as a multiresolution decomposition into coarse scales early in the ordering and fine scales later in the ordering. In particular, the minimal pairwise distance $\ell_i = \| \bs_i - \bs_{c_i(1)} \|$ among the first $i$ locations of the ordering decays roughly as $\ell_i \propto i^{-1/\text{dim}}$, where $\text{dim}$ here is the dimension of the spatial domain (see Figure \ref{fig:scale}).
As a result of the maximin ordering, the $i$th regression in \eqref{eq:regi} can be viewed as a spatial prediction at location $\bs_i$ based on data at locations $\bs_1,\ldots,\bs_{i-1}$ that lie roughly on a regular grid with distance (i.e., scale) $\ell_i$. 

When the variables $y_1,\ldots,y_N$ are not associated with spatial locations or when Euclidean distance between the locations is not meaningful (e.g., nonstationary, multivariate, spatio-temporal, or functional data), the maximin and neighbor ordering can be carried out based on other distance metrics, such as $(1 - |\text{correlation}|)^{1/2}$ based on some guess or estimate of the correlation between variables \citep{Kang2021,Kidd2020}.

\subsection{Priors on the conditional non-Gaussianity \texorpdfstring{$\sigma_i^2$}{}\label{sec:prior_s}}

In \eqref{eq:kernel}, $\sigma_i^2$ determines the degree of nonlinearity in $f_i$; hence, $\sigma_i^2,\ldots,\sigma_N^2$ together determine the conditional non-Gaussianity in the distribution of $\by_{i:N}$ given $\by_{1:i-1}$.
A priori, we assume that the degree of nonlinearity decays polynomially with length scale $\ell_i$, namely $\sigma_i^2 = e^{\theta_{\sigma,1}} \ell_i^{\theta_{\sigma,2}}$, which allows the conditional distributions of $\by_{i:N}$ given $\by_{1:i-1}$ to be increasingly Gaussian as $i$ increases, as a function of hyperparameters $\theta_{\sigma,1},\theta_{\sigma,2}$.

This prior assumption is motivated by the behavior of stochastic processes with quasiquadratic loglikelihoods.
A quasiquadratic loglikelihood of order $r$ is the sum of a quadratic leading-order term that depends on the $r$-th derivatives of the process, and a nonquadratic term that may only depend on derivatives up to order $r - 1$.
Gaussian smoothness priors (with quadratic loglikelihoods) such as the Mat{\'e}rn model \citep{Whittle1954,whittle1963stochastic} are closely related to linear elliptic PDEs. 
They can formally be thought of as having log-densities $-\langle u, Au \rangle / 2 - \langle u, b \rangle$ that are maximized by solutions of the linear equation $Au = b$.
Similarly, the maximizers of quasiquadratic log-densities $-\langle u, L(D^r u) / 2 \rangle - V(D^{r-1} u, \ldots u)$ are solutions of quasilinear PDEs $L(D^r u) = - \frac{d}{du} V(D^{r-1}u, \ldots u)$.
A wide range of physical phenomena is governed by quasilinear PDEs.
For instance, the Cahn-Hilliard \citep{cahn1958free} and Allen-Cahn \citep{allen1972ground} equations describe phase separation in multi-component systems, the Navier-Stokes equation describes the dynamics of fluids, and the F{\"o}ppl-von K{\'a}rm{\'a}n equations describe the large deformations of thin elastic plates.
If the order of a data-generating quasilinear PDE is known, a Mat{\'e}rn model with matching regularity is a sensible choice to model the leading-order behavior. 
However, most of the time, the observations will not arise from a known PDE model, and so the above arguments primarily motivate us to expect screening and power laws, without quantifying their effects.

In its simplest form, the mechanism relating local conditioning and approximate Gaussianity is captured by the classical Pointcar{\'e} inequality \citep{adams2003sobolev}.
\begin{lemma}[Poincar{\'e} inequality]
    Let $\Omega \subset \mathbb{R}^d$ be a Lipschitz-bounded domain with diameter $\ell$, let $u$ and its first derivative be square-integrable, and let $u_{\Omega} = \frac{1}{|\Omega|} \int_{\Omega} u(x) dx $ be the mean of $u$ over $\Omega$.
    Then, we have 
    \begin{equation}
        \|u - u_{\Omega}\|_{L^2\left(\Omega\right)} \leq \ell \|\nabla u\|_{L^2\left(\Omega\right)}.
    \end{equation}
\end{lemma} 
The Poincar{\'e} inequality directly implies the following corollary.
\begin{corollary}
Let $\Omega$ be a Lipschitz-bounded domain. Let $\tau$ be a partition of $\Omega$ into Lipschitz-bounded subdomains with diameter upper-bounded by $\ell$, and assume that $u,v$ and their first derivatives are square-integrable and satisfy $\int_t (u - v) dx = 0$ for all $t \in \tau$.
Then,
\begin{equation}
        \|u - v\|_{L^2\left(\Omega\right)} \leq \ell \|\nabla u - \nabla v\|_{L^2\left(\Omega\right)}. 
\end{equation}
\end{corollary}
This means that after conditioning a stochastic process on averages of diameter $\ell \ll 1$, even a minor perturbation in $u$ results in a large change of $\nabla u$.
For a quasiquadratic likelihood of order $r=1$ with lower-bounded curvature of the quadratic part, a perturbation that effects even a minor change in the nonlinear part, depending only on $u$, must effect a major change in the leading-order, quadratic term, assuming the curvature of the latter is bounded from below. 
Under suitable growth conditions on the nonlinear term, this means that the conditional density of a quasiquadratic likelihood of order $r = 1$ is dominated by the leading-order quadratic term as $\ell$ approaches zero. 
As a result, the conditional stochastic process is approximately Gaussian.
This is illustrated in a numerical example in Figure \ref{fig:CHconditioning}. Additional details are provided in Appendix \ref{app:quasilinear}.
Using generalizations of the Poincar{\'e} inequality to $r > 1$ and point-wise measurements \citep[e.g.,][Thm.~5.9(2)]{Schafer2017}, the above argument can be extended to the setting of $r >1$ and conditioning on point sets with distance $\ell$ (instead of local averages).

\begin{figure}
    \centering
    \includegraphics[width=0.24\textwidth]{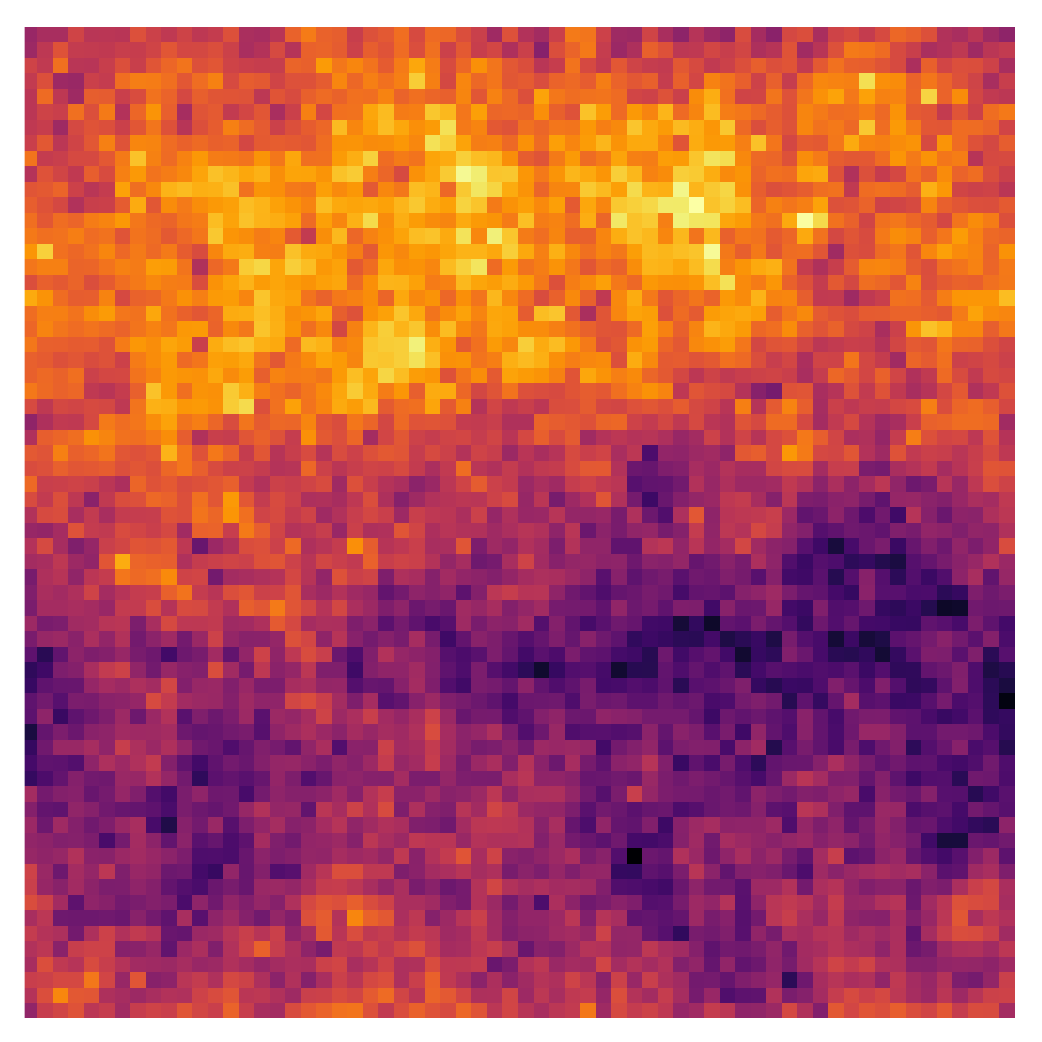}
    \includegraphics[width=0.24\textwidth]{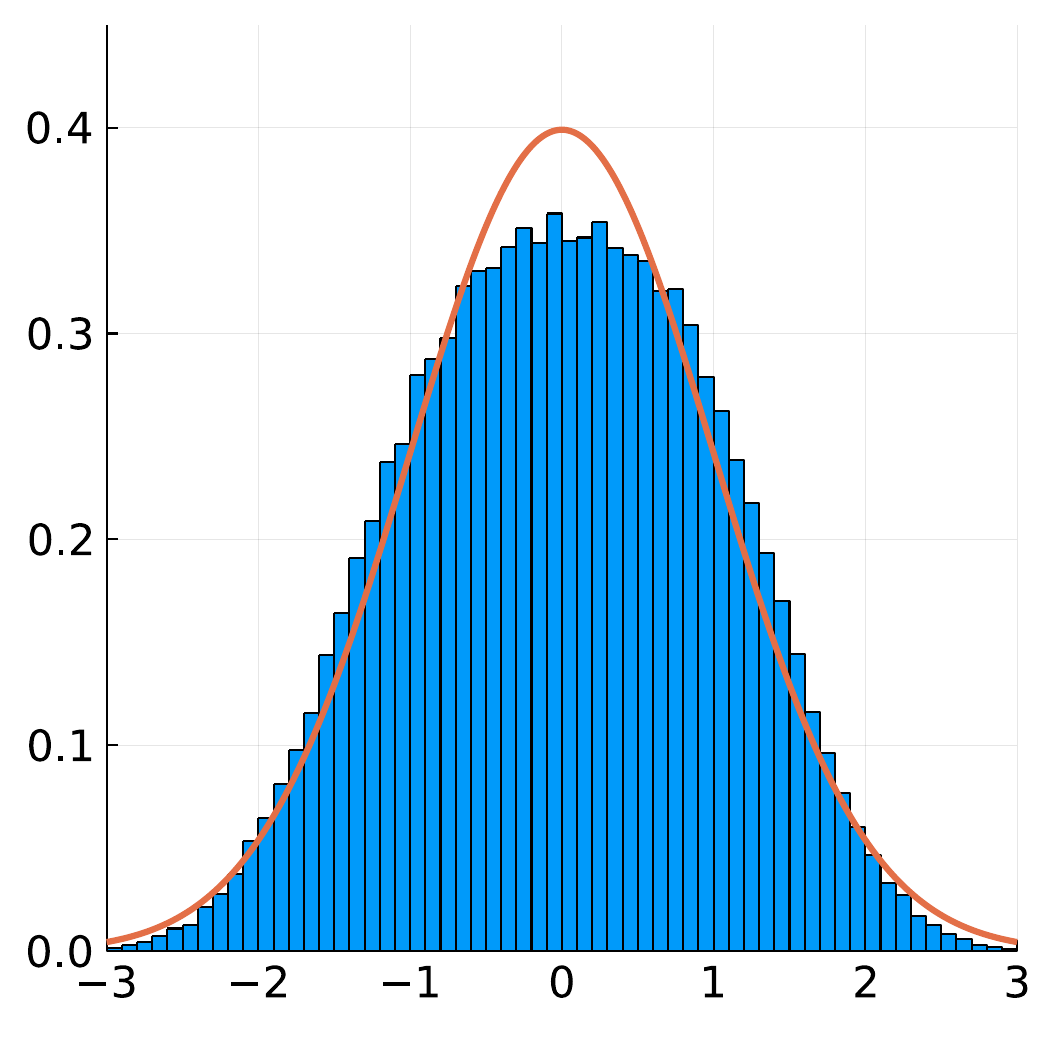}
    \includegraphics[width=0.24\textwidth]{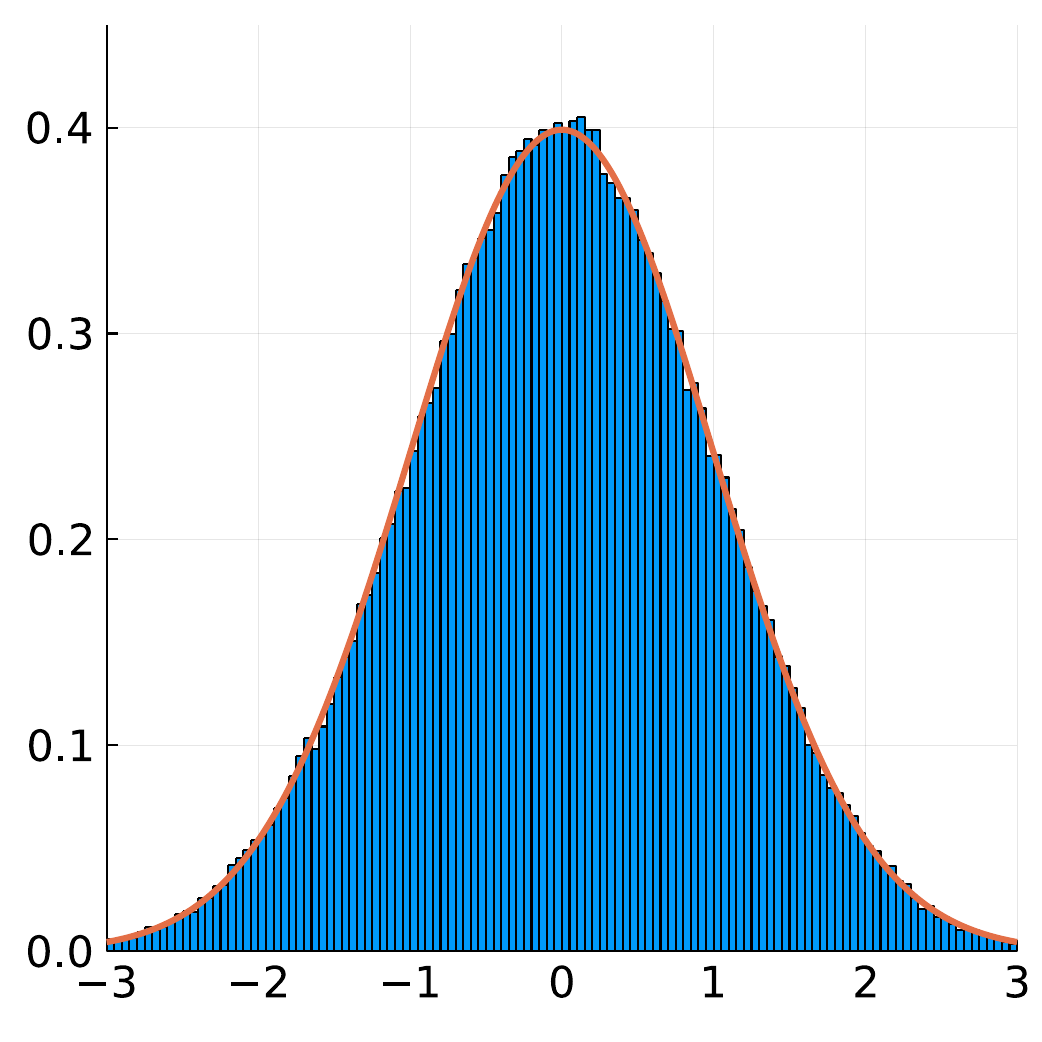}
    \includegraphics[width=0.24\textwidth]{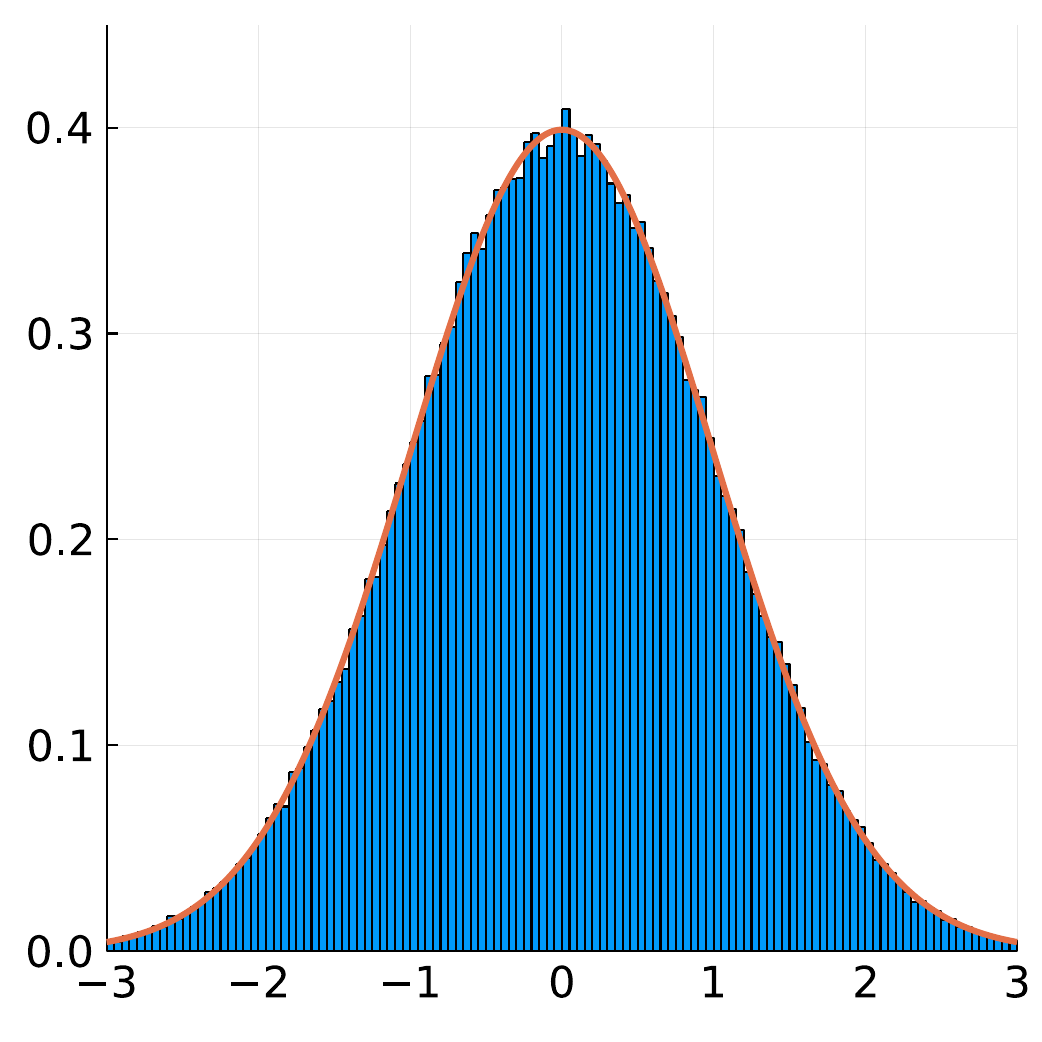}
    \caption{Samples from the non-Gaussian process in \eqref{eqn:CHdensity} feature regions of negative (dark) and positive (light) values (first panel). The distribution at a given location is a mixture of these two possibilities and thus non-Gaussian (second panel). By contrast, after conditioning on averages over regions of size $\ell = 2^{-1}$ (third panel) or $\ell = 2^{-5}$ (fourth panel), the conditional distribution is close to Gaussian, as these averages determine with high probability whether the location is in a positive or negative region. (See Appendix \ref{app:quasilinear} for details.)} 
    \label{fig:CHconditioning}
\end{figure}

\subsection{Priors on the conditional variances \texorpdfstring{$d_i^2$}{}\label{sec:prior_d}}

As we have argued in Section \ref{sec:prior_s}, even non-Gaussian stochastic processes with quasiquadratic loglikelihoods exhibit conditional near-Gaussianity on fine scales. Thus, we will now describe prior assumptions for the $d_i^2$ and $f_i$ in \eqref{eq:regi} that are motivated by the behavior of a transport map $\map$ for a Gaussian target distribution with Mat{\'e}rn covariance (see Figure \ref{fig:matern}), which is a highly popular assumption in spatial statistics. 
The Mat{\'e}rn covariance function is also the Green's function of an elliptic PDE \citep[][]{Whittle1954,whittle1963stochastic}. 

\citet[][Thm.~2.3]{Schafer2017} show that Gaussian processes with covariance functions given by the Green's function of elliptic PDEs of order $r$ have conditional variance of order $\ell_i^{2r}$ when conditioned on the first $i$ elements of the maximin ordering (see Figure \ref{fig:matern1}). 

Hence, for the noise or conditional variances $d_i^2 \sim \mathcal{IG}(\alpha_i,\beta_i)$ as in Section \ref{sec:priors}, we set $E(d_i^2) = \beta_i/(\alpha_i -1) = e^{\theta_{d,1}} \ell_i^{\theta_{d,2}}$. Assuming the prior standard deviation of $d_i^2$ to be equal to $g$ times the mean, we obtain $\alpha_i = 2+1/g^2$ and $\beta_i = e^{\theta_{d,1}} \ell_i^{\theta_{d,2}} (1+1/g^2)$. For our numerical experiments, we chose $g=4$ to obtain a relatively vague prior for the $d_i^2$.

\begin{figure}
\centering
	\begin{subfigure}{.37\textwidth}
	\centering
  	\includegraphics[trim=4mm 13mm 9mm 22mm, clip,width =.99\linewidth]{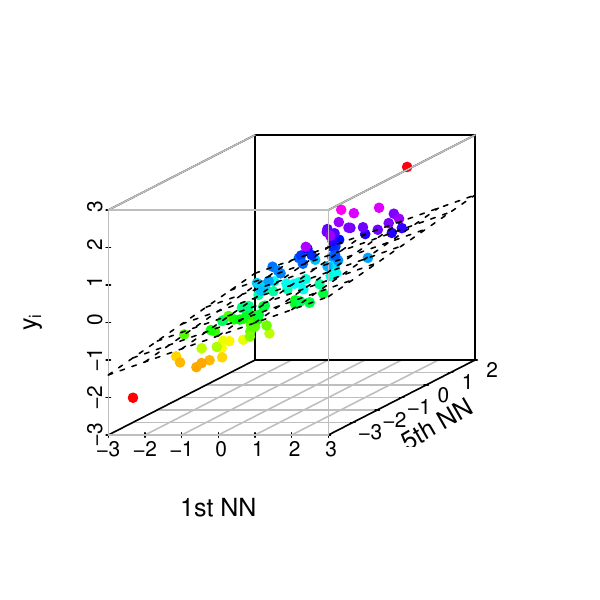} 
	\caption{$\by_i$ vs $\by_{c_i(1)}$, $\by_{c_i(5)}$ for $i=290$}
	\label{fig:maternscatter}
	\end{subfigure}%
\hfill
	\begin{subfigure}{.27\textwidth}
	\centering
 	\includegraphics[width =.99\linewidth]{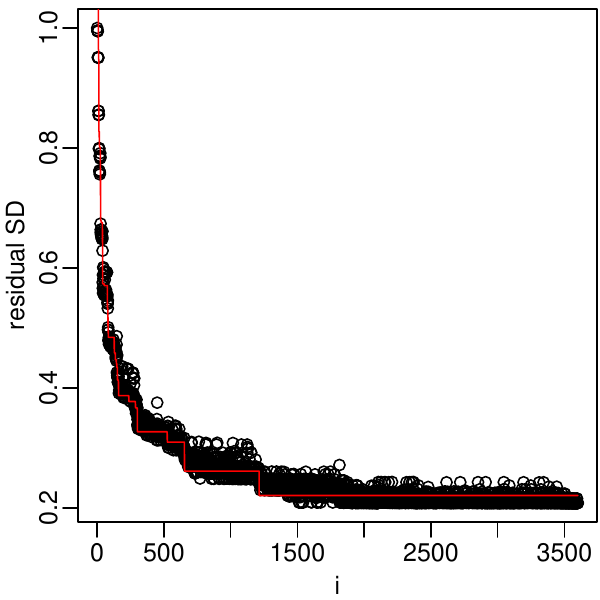}
	\caption{$d_{i}$ ($\circ$) \, and \,  $e^{\theta_{d,1}} \ell_i^{\theta_{d,2}}$ ({\color{red}---})}
	\label{fig:matern1}
	\end{subfigure}%
\hfill
	\begin{subfigure}{.27\textwidth}
	\centering
 	\includegraphics[width =.99\linewidth]{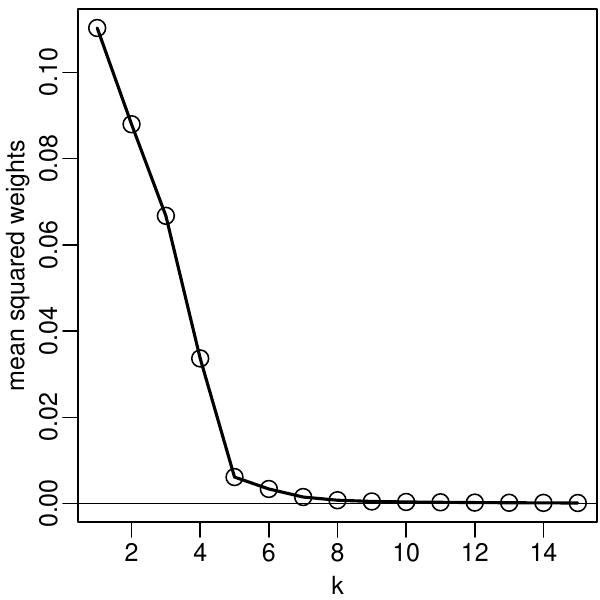}
	\caption{Avg.\ $\{b_{i,k}^2: i\!=\!1,\ldots,N\}$}
	\label{fig:matern2}
	\end{subfigure}%
  \caption{For a Gaussian process with exponential covariance on the grid and with the ordering from Figure \ref{fig:maxmin}, expressing the joint distribution $p(\by)$ using a transport map as in \eqref{eq:map}--\eqref{eq:condmap} results in a series of regressions as in \eqref{eq:regi} with linear predictors, $f_i(\by_{1:i-1}) = \sum_{k=1}^{i-1} y_{c_i(k)}b_{i,k}$, where $c_i(k)$ indicates the $k$th nearest (previously ordered) neighbor of the $i$th location.
  (For non-Gaussian $p(\by)$, the functions $f_i$ are nonlinear.)
  (a): For $n=100$ simulations, the values of $\by_i$ and its 1st and 5th nearest neighbor (NN) lie on a low-dimensional manifold; the regression plane (assuming all other variables to be fixed) indicates a stronger influence of the 1st NN (see the slope of the intersection of the regression plane with the front of the box) than of the 5th NN. 
  (b): The conditional standard deviations decay as a function of the length scale $\ell_i$ (see Figure \ref{fig:scale}). 
  (c) The squared regression coefficients decay rapidly as a function of neighbor number $k$.
}
\label{fig:matern}
\end{figure}

\subsection{Priors on the regression functions \texorpdfstring{$f_i$}{}\label{sec:prior_f}}

The regression functions $f_i: \mathbb{R}^{i-1} \rightarrow \mathbb{R}$ in \eqref{eq:regi} were specified to be GPs in $(i-1)$-dimensional space in Section \ref{sec:priors}. For the covariance function in \eqref{eq:kernel}, we assume that $\rho_i(\by_{1:i-1},\by_{1:i-1}') = \rho\big(h_i(\by_{1:i-1},\by_{1:i-1}')/\gamma\big)$, where $h_i^2(\by_{1:i-1},\by_{1:i-1}') = (\by_{1:i-1}-\by_{1:i-1}')^\top\bQ_i(\by_{1:i-1}-\by_{1:i-1}')$, $\gamma = \exp(\theta_\gamma)$ is a range parameter, and $\rho$ is an isotropic correlation function, taken to be Mat\'ern with smoothness 1.5 for our numerical experiments.

To make this potentially high-dimensional regression feasible, we again use the example of a spatial GP with Mat\'ern covariance to motivate regularization and sparsity via the relevance matrix $\bQ_i = \diag(q_{i,1}^2,\ldots,q_{i,i-1}^2)$.
We assume that the relevance of the $k$th neighbor (see Section \ref{sec:maximin}) decays exponentially as a function of $k$, such that $q_{i,c_i(k)}$ decays as $\exp(\theta_q k)$. This type of behavior, often referred to as the screening effect \citep[e.g.,][]{stein20112010}, is illustrated in Figure \ref{fig:matern2}, and it has been exploited for covariance estimation of a Gaussian spatial field by \citet{Kidd2020}. 
Recently, \citet{Schafer2017} proved exponential rates of screening for Gaussian processes derived from elliptic boundary-value problems; following the discussion in Section~\ref{sec:prior_s}, we expect similar conditional-independence phenomena to hold on the fine scales of processes with quasiquadratic loglikelihoods. As shown in Figure \ref{fig:precweights}, we also observed this behavior for climate data.

Given this exponential decay as a function of the neighbor number $k$, the relevance will be essentially zero for sufficiently large $k$, and so we achieve sparsity by setting 
\begin{equation}
q_{i,c_i(k)} = \begin{cases} \exp(\theta_q k), & k \leq m,\\ 0, & k>m, \end{cases}
\label{eq:sparsity}    
\end{equation}
where the sparsity parameter $m=\max \{k: \exp(\theta_q k) \geq \varepsilon\}$ is determined by the data through the hyperparameter $\theta_q$. We used $\varepsilon = 0.01$ for our numerical examples, which produced highly accurate inference and usually resulted in $m<10$.
Assumption \eqref{eq:sparsity} induces a sparse transport map, in that $f_i$ (and thus $\map_i$) depend on $\by_{1:i-1}$ only through the $m$ nearest neighbors $y_{c_i(1)},\ldots,y_{c_i(m)}$, where $\rho_i$ is isotropic as a function of the scaled inputs $y_{c_i(k)}/q_{i,c_i(k)}$.
Sparsity in the transport map is equivalent to an assumption of ordered conditional independence.
Similar ordered-conditional-independence assumptions are also popular for Vecchia approximations of Gaussian fields with parametric covariance functions.

Identifying the regression functions $f_i$ in $m$-dimensional space is further aided by the data approximately concentrating on a lower-dimensional manifold due to the strong dependence between most $y_{c_i(k)}$ and $y_{c_i(l)}$ for small $k,l \leq m$ (e.g., see Figure \ref{fig:maternscatter}).

\subsection{Inference\label{sec:spatialinference}}

\begin{algorithm}[t]
\caption{Inference for the spatial transport map}
\begin{algorithmic}[1]
\STATE Order $y_1,\ldots,y_N$ in maximin ordering and compute scales $\ell_i$ and nearest-neighbor indices $c_i(1),\ldots,c_i(m_{\text{max}})$ (e.g., $m_{\text{max}}=30$) for each $i=1,\ldots,N$ (see Section \ref{sec:maximin})
\STATE Compute $\hat\bftheta = \argmax_{\bftheta} \log p(\bY)$ via stochastic gradient ascent, where $p(\bY) \textstyle \propto \prod_{i=1}^N \big( \, |\bG_i|^{-1/2} \times ({\beta_i^{\alpha_i}}/{\tilde\beta_i^{\tilde\alpha_i}}) \times {\Gamma(\tilde\alpha_i)}/{\Gamma(\alpha_i)} \, \big)$, 
with 
$\bftheta = (\theta_{\sigma,1},\theta_{\sigma,2},\theta_{d,1},\theta_{d,2},\theta_\gamma,\theta_q)$,
$\tilde\alpha_i = \alpha_i + n/2$, 
$\tilde\beta_i = \beta_i + \by_i{}^\top \bG_i^{-1} \by_i/2$,
$\alpha_i = 2+1/g^2$,
$\beta_i = e^{\theta_{d,1}} \ell_i^{\theta_{d,2}} (1+1/g^2)$,
$g=4$,
$\bG_i = (C_i(\by^{(j)},\by^{(l)}) )_{j,l=1,\ldots,n}/(e^{\theta_{d,1}} \ell_i^{\theta_{d,2}}) + \bI_n$, 
$C_i(\by^{(j)},\by^{(l)}) = \sum_{k=1}^m \tilde{y}_{c_i(k)}^{(j)}\tilde{y}_{c_i(k)}^{(l)} + \allowbreak \sigma^2_i \, \rho\big((\sum_{k=1}^m (\tilde{y}_{c_i(k)}^{(j)}-\tilde{y}_{c_i(k)}^{(l)})^2)^{1/2}/\gamma\big)$, 
$\tilde{y}_{c_i(k)}^{(j)} = y_{c_i(k)}^{(j)} e^{\theta_q k}$, 
$m=\max \{k: e^{\theta_q k} \geq 0.01\}$,
$\sigma^2_i = e^{\theta_{\sigma,1}} \ell_i^{\theta_{\sigma,2}}$, 
$\gamma = e^{\theta_\gamma}$, $\rho(x)=(1+x\sqrt{3})\exp(-x\sqrt{3})$
\STATE Use fitted map as desired. For example, generate a new sample $\by^\star=\pmap_{\hat{\bftheta}}^{-1}(\bz^\star)$ using \eqref{eq:invmap} based on $\bz^\star \sim \normal_N(\bfzero,\bI_N)$.
\end{algorithmic}
\label{alg:inf}
\end{algorithm}

Based on the prior distributions in Sections \ref{sec:prior_s}--\ref{sec:prior_f}, we can carry out inference and compute the transport map as in Section \ref{sec:invertiblemap}. 
The prior distributions depend on a vector of hyperparameters, $\bftheta = (\theta_{\sigma,1},\theta_{\sigma,2},\theta_{d,1},\theta_{d,2},\theta_\gamma,\theta_q)$. When making inference on $\bftheta$ as described in Section \ref{sec:hyper}, we effectively let the training data $\bY$ decide the degree of sparsity (through $\theta_q$ via $m$) and the degree of nonlinearity (through $\theta_{\sigma,1},\theta_{\sigma,2}$ via $\sigma_i$).
Algorithm \ref{alg:inf} summarizes the inference procedure.
Figure \ref{fig:sine} illustrates estimation of transport-map components in a simulated example.

\begin{figure}
\centering
	\begin{subfigure}{.22\textwidth}
	\centering
 	\includegraphics[width =.99\linewidth]{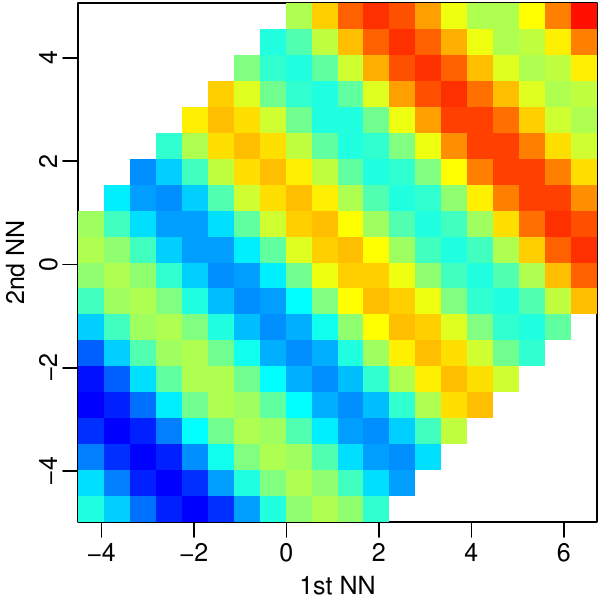}
	\caption{True}
	\label{fig:sinetrue}
	\end{subfigure}%
\hfill	
	\begin{subfigure}{.22\textwidth}
	\centering
 	\includegraphics[width =.99\linewidth]{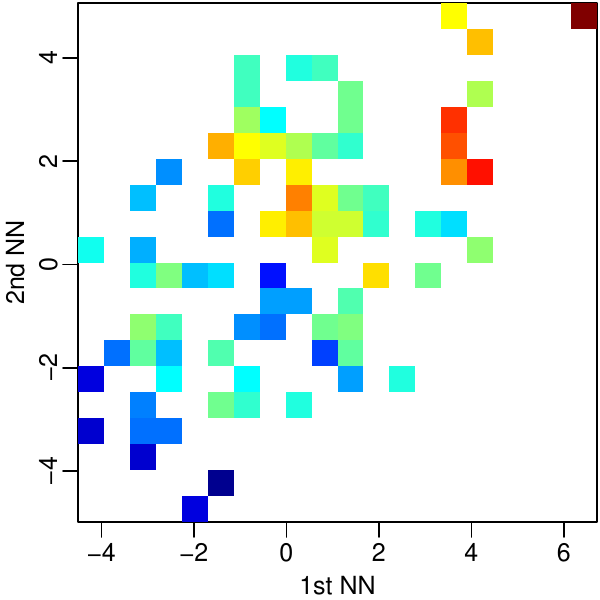}
	\caption{Data}
	\label{fig:sinedata}
	\end{subfigure}%
\hfill	
	\begin{subfigure}{.22\textwidth}
	\centering
 	\includegraphics[width =.99\linewidth]{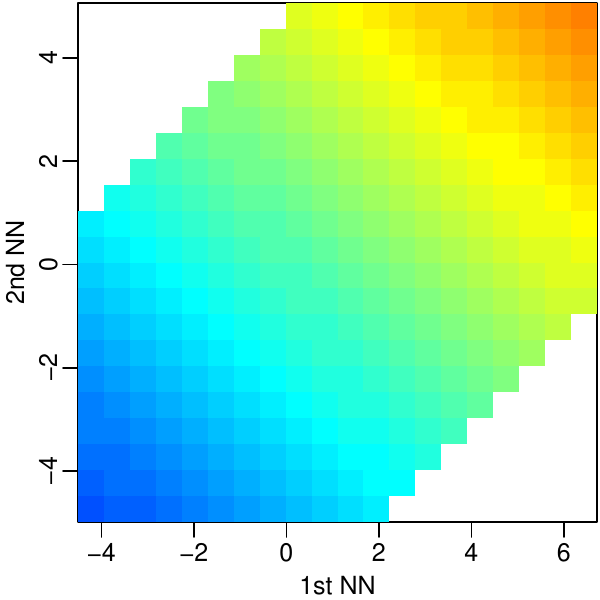}
	\caption{Linear}
	\label{fig:sinelin}
	\end{subfigure}%
\hfill	
	\begin{subfigure}{.22\textwidth}
	\centering
 	\includegraphics[width =.99\linewidth]{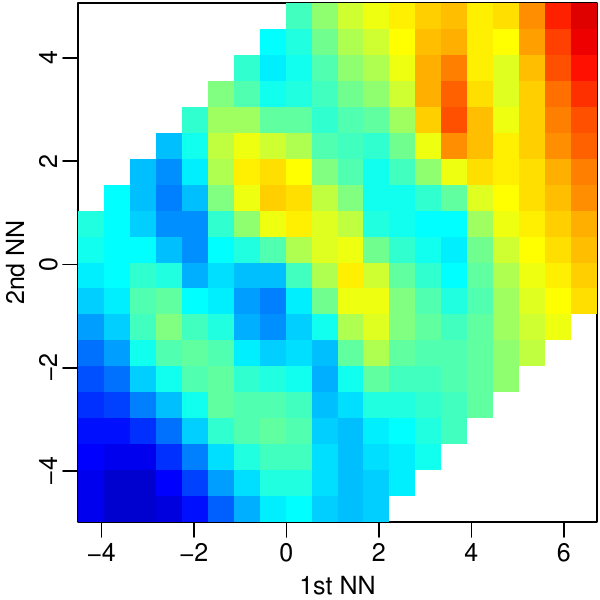}
	\caption{Nonlinear}
	\label{fig:sinenonlin}
	\end{subfigure}%
\hfill
\begin{minipage}{.042\textwidth}
~

\vspace{-10mm}

\includegraphics[trim=78mm 0mm 5mm 0mm, clip,width =.93\linewidth]{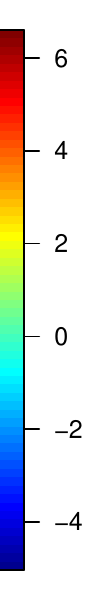} 
\end{minipage}	
  \caption{Simulation from a nonlinear map with sine structure in $f_i$, described as NR900 in Section \ref{sec:simstudy}. For $n=100$ and $i=80$, $y_i$ versus its 1st and 2nd nearest neighbor (NN): true $f_i$ (a), observations $\by_i$ (b), together with linear (c) and nonlinear (d) fit (i.e., posterior means) of $f_i$, with further variables in $\by_{1:i-1}$ held at their mean levels. The linear map in (c) is estimated under the restriction $\sigma_i=0$. In (d), we have a nonlinear regression in $79$-dimensional space, with $m=5$ active variables in the estimated (via $\bftheta$) nonlinear model.}
\label{fig:sine}
\end{figure}

Due to the sparsity assumption in \eqref{eq:sparsity}, the computational complexity is lower than in Section \ref{sec:invertiblemap}; specifically, determining $\pmap_i$ now only requires $\order(n^3 + m n^2)$ time, again in parallel for each $i=1,\ldots,N$. Each application of the transport map or its inverse then requires $\order(N(n^2 + mn))$ time.
The maximin ordering and nearest neighbors can also be computed in quasilinear time in $N$ \citep[][Alg.~7]{Schafer2020}.

In Section \ref{sec:invertiblemap}, we discussed using $\pmap$ in \eqref{eq:singlemap} to transform the non-Gaussian $\by^\star$ to standard Gaussian map coefficients $\bz^\star = \pmap(\by^\star)$. This concept, which is illustrated in Figure \ref{fig:trans}, is especially interesting in our spatial setting. Due to the maximin ordering (Figure \ref{fig:maxmin}), the scales $\ell_i$ are arranged in decreasing order, and in our prior the $d_i^2$ also follow a decreasing stochastic order with $E(d_i^2) = e^{\theta_{d,1}} \ell_i^{\theta_{d,2}}$ (see, e.g., Figure \ref{fig:matern1}). Thus, we can view the map components as a form of nonlinear principal components (NPCs), with the map coefficients as the corresponding component scores.
For Gaussian processes with covariance functions given by the Green's function of elliptic PDEs, similar to the Mat{\'e}rn family, it can be shown that these principal components based on the maximin ordering are approximately optimal \citep{Schafer2017}.
For example, as illustrated in Appendix \ref{app:recon}, these NPCs can be used for dimension reduction by only storing or modeling the first $k$, say, map coefficients $\bz^\star_{1:k} = (z^\star_1,\ldots,z^\star_k)^\top$. 
Note that if we set $\bz_{k+1:N}^\star = \bfzero$, we assume $y_i^\star = \hat f_i(\by_{1:i-1}^\star)$ for $i>k$, which overestimates dependence and underestimates variability; hence, it is preferable to draw $\bz_{k+1:N}^* \sim \normal(\bfzero,\bI)$.
In addition to reducing storage, we can also use this approach for conditional simulation \citep[][Lemma 1]{Marzouk2016}, in which we fix the large-scale features of an observed field by fixing the first $k$ map coefficients (see Figure \ref{fig:precipcond} for an illustration).
To model a time series of spatial fields, we could assume a linear vector autoregressive model for the NPCs, such that the map coefficients at time $t+1$, say $\bz_{1:k}^{(t+1)}$, linearly depend on $\bz_{1:k}^{(t)}$.
When it is of interest to regress some response on a spatial field, one could also use the first $k$ map coefficients of the field as the covariates, similar to the use of function principal component scores in regression.

\begin{figure}
\centering
	\begin{subfigure}{.19\textwidth}
	\centering
 	\includegraphics[width =.99\linewidth]{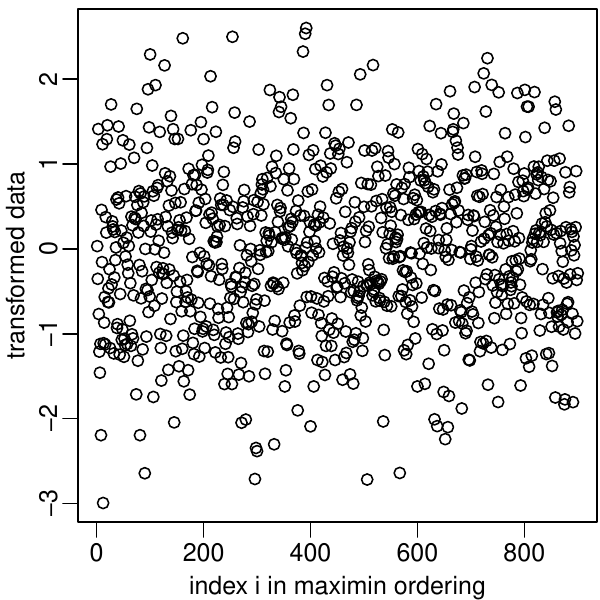}
	\caption{Transformed}
	\label{fig:transdata}
	\end{subfigure}%
\hfill	
	\begin{subfigure}{.19\textwidth}
	\centering
 	\includegraphics[width =.99\linewidth]{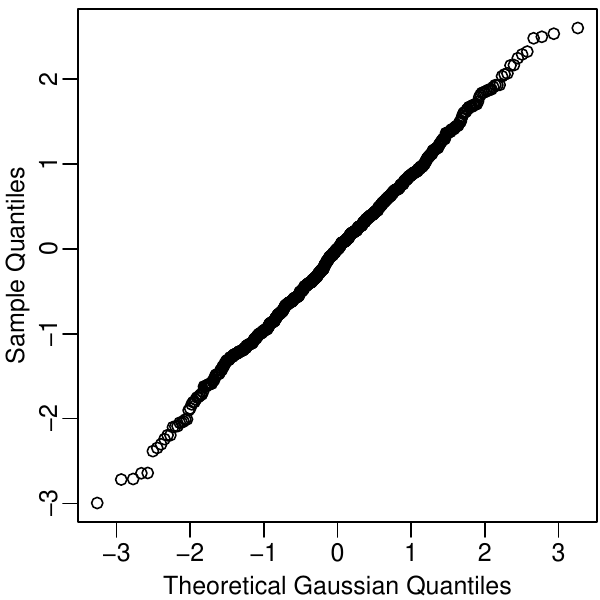}
	\caption{QQ plot}
	\label{fig:qq}
	\end{subfigure}%
\hfill	
	\begin{subfigure}{.19\textwidth}
	\centering
 	\includegraphics[width =.99\linewidth]{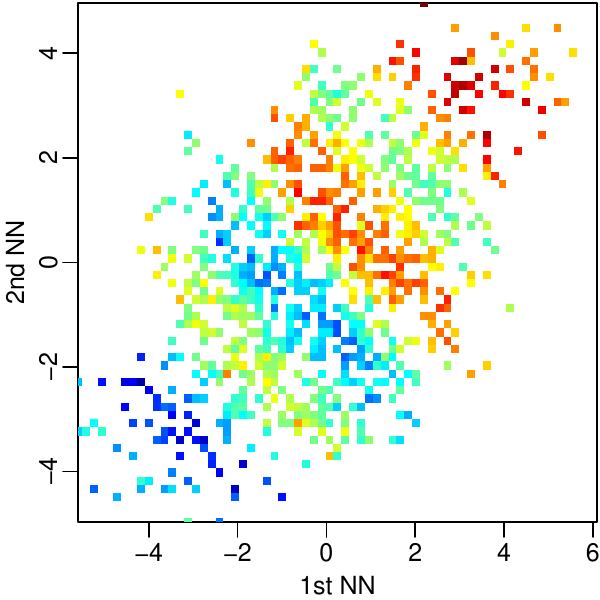}
	\caption{Test data}
	\label{fig:avgtest}
	\end{subfigure}%
\hfill	
	\begin{subfigure}{.19\textwidth}
	\centering
 	\includegraphics[width =.99\linewidth]{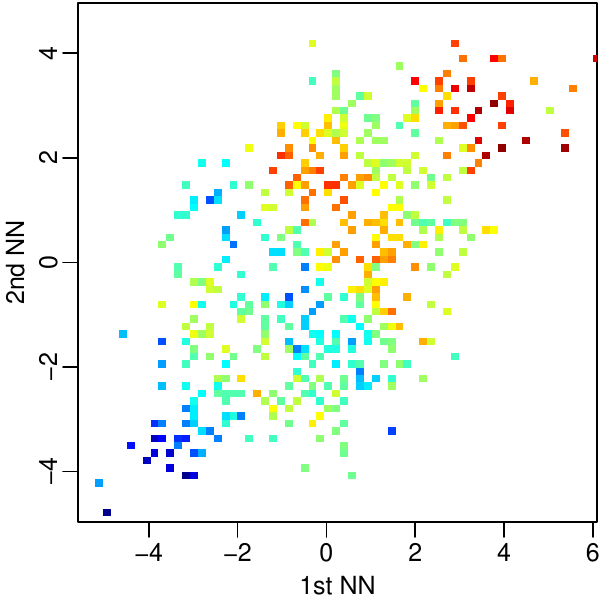}
	\caption{Reference avg}
	\label{fig:avglatent}
	\end{subfigure}%
\hfill	
	\begin{subfigure}{.19\textwidth}
	\centering
 	\includegraphics[width =.99\linewidth]{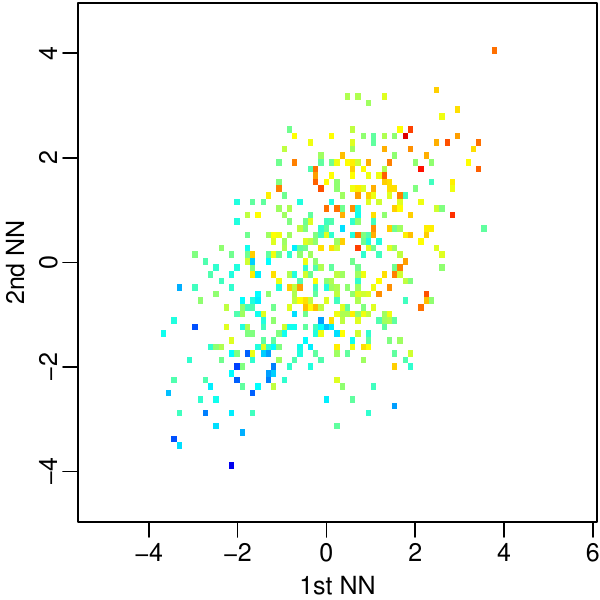}
	\caption{Original avg}
	\label{fig:avgorig}
	\end{subfigure}%
\hfill
\begin{minipage}{.036\textwidth}
~

\vspace{-9mm}

\includegraphics[trim=78mm 0mm 5mm 0mm, clip,width =.93\linewidth]{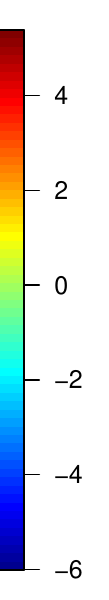} 
\end{minipage}	
  \caption{Illustration of map coefficients $\bz^\star = \pmap(\by^\star)$ (see Sections \ref{sec:invertiblemap} and \ref{sec:spatialinference}) for the simulated NR900 data using $\pmap$ inferred from $n=100$ training data. 
  (a)--(b): The $N=900$ map coefficients corresponding to one test sample are roughly i.i.d.\ Gaussian. 
  (c): For $1{,}000$ test samples, $y^*_{80}$ versus 1st and 2nd NNs (cf.~Figure \ref{fig:sine}).
  (d) When averaging pairs of two map-coefficient vectors in reference space and transforming back to the original space using \eqref{eq:invmap}, the sinusoidal relationship between $y^*_{80}$ and its NNs is preserved in the resulting 500 averages. 
  (e) When averaging test samples directly in original space, the nonlinear structure is lost.}
\label{fig:trans}
\end{figure}

\section{Non-Gaussian errors\label{sec:dpm}}

So far, we have focused on nonlinear, non-Gaussian dependence structures. The model described in Sections \ref{sec:bayesian} and \ref{sec:spatial} assumes Gaussian errors in the regressions \eqref{eq:regi}, which implies a marginal Gaussian distribution for $y_1$, the first variable in the maximin ordering. If this does not hold at least approximately, extensions based on additional marginal (i.e., pointwise) transformations, especially of the first few variables in the ordering, are straightforward. For example, assume that the model from Sections \ref{sec:bayesian} and \ref{sec:spatial} holds for $\by$, but that we actually observe $\tilde\by = \mathcal{G}(\by)$ such that $\tilde y_i = g_i(y_i)$. If the $g_i$ are one-to-one differentiable functions, the resulting posterior map is a simple extension of that in Proposition \ref{prop:maps}. The $g_i$ can be pre-determined (see Section \ref{sec:application} for an example with a log transform) or may depend on $\bftheta$ and thus be inferred based on a minor modification of the integrated likelihood in Proposition \ref{prop:lik}.

To increase flexibility of the marginal distributions, the GP errors $\epsilon_i^{(j)}$ can be modeled using Bayesian nonparametrics for all $i=1,\ldots,N$. More precisely, we will use Dirichlet process mixtures (DPMs). In \eqref{eq:regi}, we now assume that $f_i(\cdot) \sim \GP(0,C_i)$, and the $\epsilon_i^{(j)}$ are distributed according to a DPM for $j=1,\ldots,n$:
\begin{equation}
    \epsilon_i^{(j)} | \mu_i^{(j)},d_i^{(j)} \sim \normal(\mu_i^{(j)},(d_i^{(j)})^2), \quad (\mu_i^{(j)},(d_i^{(j)})^2) | \mathcal{F}_i \sim \mathcal{F}_i, \quad \mathcal{F}_i \sim \mathcal{DP}(\nig(\xi_i,\eta_i,\alpha_i,\beta_i),\zeta_i),
\end{equation}
where $\zeta_i$ is the concentration parameter, and the base measure $\nig(\xi_i,\eta_i,\alpha_i,\beta_i)$ is a normal-inverse-Gamma distribution with density
$
p(x,y) = \eta_i^{1/2} (2\pi y)^{-1/2} \beta_i^{\alpha_i}/\Gamma(\alpha_i) y^{-\alpha_i-1}\exp(-(2\beta_i+\eta_i(x-\xi_i)^2)/(2y)),
$
where we assume $\xi_i=0$.
The degree of non-Gaussianity allowed for the $\epsilon_i^{(j)}$ is determined by $\eta_i$ and $\zeta_i$. A small value of $\zeta_i$ concentrates the Dirichlet process near the NIG base measure, for which a large value of $\eta_i$ shrinks the $\mu_i^{(j)}$ toward zero. Thus, in the limit as $\zeta_i \rightarrow 0$ and $\eta_i \rightarrow \infty$, we obtain a model similar to that in Section \ref{sec:priors}
(except that here the $d_i^{(j)}$ do not appear in the variance of the GP $f_i$).
Conversely, for large $\zeta_i$ (or large $n$), the posterior of $\epsilon_i^{(j)}$ will be a Gaussian mixture that may differ substantially from the posterior implied by the model in Section \ref{sec:priors}.

For the spatial setting with maximin ordering of Section \ref{sec:spatial}, we can again find a sparse parameterization in terms of hyperparameters $\bftheta = (\theta_{\sigma,1},\theta_{\sigma,2},\theta_{d,1},\theta_{d,2},\theta_\gamma,\theta_q,\theta_{\zeta,1},\theta_{\zeta,2},\theta_{\eta,1},\theta_{\eta,2})$. We parameterize the $\alpha_i$, $\beta_i$, $C_i$ in terms of the first six hyperparameters as in Sections \ref{sec:prior_s}--\ref{sec:prior_f}. For the concentration parameter $\zeta_i = e^{\theta_{\zeta,1}}\ell_i^{\theta_{\zeta,2}}$, we allow increasing shrinkage toward Gaussianity for increasing $i$. We similarly set $\eta_i = e^{\theta_{\eta,1}}\ell_i^{\theta_{\eta,2}}$.
For this DPM model, we take a fully Bayesian perspective and assume an improper uniform prior for $\bftheta$ over $\mathbb{R}^{10}$.

The resulting model is fully nonparametric with the exception of the additivity assumption in \eqref{eq:condmap}. Specifically, due to the nonparametric nature of the DPM, the universal approximation property of GPs \citep{Micchelli2006}, and nonzero prior probability for the dense (non-sparse) transport map, the posterior distribution obtained using this model contracts (for $n\rightarrow \infty$ and fixed $N$) to the Kullback-Leibler (KL) projection of the actual distribution of $\by$ onto the space of distributions that can be described by a transport map whose components are additive in the $i$th argument as in \eqref{eq:condmap}, due to the KL optimality of the Knothe-Rosenblatt map \citep[][Sec.~4.1]{Marzouk2016}. In other words, as the number of replicates increases, the learned distribution gets as close as possible to the truth under the additivity restriction.

Inference for our DPM model cannot be carried out in closed form anymore and instead relies on a Metropolis-within-Gibbs Markov chain Monte Carlo (MCMC) sampler. We can also compute and draw samples from the posterior predictive distribution
\[
\textstyle
p(\by^\star|\bY) = \prod_{i=1}^N p(y_i^\star|\by_{1:i-1}^\star,\bY),
\]
for which each $p(y_i^\star|\by_{1:i-1}^\star,\bY)$ is approximated as a Gaussian mixture based on the MCMC output.
Details for the MCMC procedure and the posterior predictive distribution are given in Appendix \ref{app:dpmgibbs}.

In the spatial setting with sparsity parameter $m$, each MCMC iteration still has time complexity $\order(N(n^3 + n^2m))$ and the computations within each iteration are highly parallel; however, the actual computational cost for this sampler is much higher (typically, roughly two orders of magnitude higher) than for the empirical Bayes approach in Section \ref{sec:hyper} due to the large number of MCMC iterations required.
Because of this larger computational expense and the loss of a closed-form transport map for the DPM model, we recommend the empirical Bayes approach (potentially after a pre-transformation $\mathcal{G}$ as described above) as the first option in most large-scale applications; the DPM model is most useful for settings in which its computational expense is not crucial, the training size $n$ is sufficiently large to discern non-Gaussian error structure, and only posterior sampling (as opposed to other functions that transport maps can provide) is of interest.

\section{Simulation study \label{sec:simstudy}}

We compared the following methods:
\begin{description}[itemsep=1pt,topsep=2pt,parsep=1pt]
\item[\texttt{nonlin}:] Our method with Bayesian uncertainty quantification described in Section \ref{sec:spatial}.
\item[\texttt{S-nonlin}:] Simplified version of \texttt{nonlin} ignoring uncertainty in the $f_i$ and $d_i$, as in \eqref{eq:detmap}.
\item[\texttt{linear}:] Same as \texttt{nonlin}, but forcing $\theta_{\sigma,1} = -\infty$ and hence linear $f_i$.
\item[\texttt{S-linear}:] Simplified version of \texttt{linear} ignoring uncertainty in the $f_i$ and $d_i$ as in \eqref{eq:detmap}, which results in a joint Gaussian posterior predictive distribution and is similar to the approach proposed and used in numerical comparisons in \citet{Kidd2020}.
\item[\texttt{DPM}:] The model with Dirichlet process mixture residuals described in Section \ref{sec:dpm}.
\item[\texttt{MatCov}:] Gaussian with zero mean and isotropic Mat\'ern covariance, whose three hyperparameters are inferred via maximum likelihood estimation.
\item[\texttt{tapSamp}:] Gaussian with a covariance matrix given by the sample covariance tapered (i.e., element-wise multiplied) by an exponential correlation matrix with range equal to the maximum pairwise distance among the locations.
\item[\texttt{autoFRK}:] resolution-adaptive automatic fixed rank kriging \citep{Tzeng2018,Tzeng2021} with approximately $\sqrt{N}$ basis functions.
\item[\texttt{local}:] a locally parametric method for climate data \citep{Wiens2021} that fits anisotropic Mat\'ern covariances in local windows and combines the local fits into a global model.
\end{description}
We also compared to a \texttt{VAE} \citep{Kingma2014} and a \texttt{GAN} designed for climate-model output \citep{Besombes2021}, but these deep-learning methods were not competitive in our simulation settings or for the climate data in Section \ref{sec:application} (see Appendix \ref{app:vae}).

We considered four simulation scenarios, for which samples are illustrated in the top row of Figure \ref{fig:maternsine}, consisting of a Gaussian distribution with an exponential covariance and three non-Gaussian extensions thereof. All scenarios can be characterized via transport maps as in Section \ref{sec:regression}, with $d_i$ as given by a Gaussian with exponential covariance in the form \eqref{eq:condmap}:
\begin{description}[itemsep=1pt,topsep=2pt,parsep=1pt]
    \item[LR900:] \textbf{L}inear map (i.e., a Gaussian distribution) with components $f_i^\text{L}(\by_{1:i-1}) = \sum_{k=1}^{i-1} b_{i,k} y_{c_i(k)}$, where the $b_{i,k}$ are based on an exponential covariance with unit variance and range parameter 0.3 on a \textbf{R}egular grid of size $N = 30 \times 30 = \mathbf{900}$ on the unit square.
    \item[NR900:] \textbf{N}onlinear extension of LR900 by a sine function of a weighted sum of the nearest two neighbors: $f_i^\text{NL}(\by_{1:i-1}) = f_i^\text{L}(\by_{1:i-1}) + 2 \sin(4(b_{i,1} y_{c_i(1)}+b_{i,2} y_{c_i(2)}))$ (see Figure \ref{fig:sinetrue})
    \item[NI3600:] Same as NR900, but at $N=\mathbf{3{,}600}$ \textbf{I}rregularly spaced locations sampled uniformly at random
    \item[NR900B:] Same as NR900, but with a \textbf{B}imodal distribution for the $\epsilon_i$ in \eqref{eq:regi}: $\epsilon_i | \mu_i, d_i \sim \normal(\mu_i,d_i^2)$ with $\mu_i$ sampled from $\{-3.5 d_i, 3.5 d_i\}$ with equal probability
\end{description}
For computational simplicity, each (true) $f_i$ was assumed to only depend on the nearest 30 previously ordered neighbors, but this gives a highly accurate approximation of a ``full'' exponential covariance in the LR900 case, as the true fields exhibit strong screening due to being based on the same maximin ordering as our methods. A further ordering-invariant simulation scenario is considered in Appendix \ref{app:simprod}.

\begin{figure}
\centering
	\begin{subfigure}{.244\textwidth}
	\centering
	\includegraphics[trim=-22mm 0mm 7mm 0mm, clip,width =.98\linewidth]{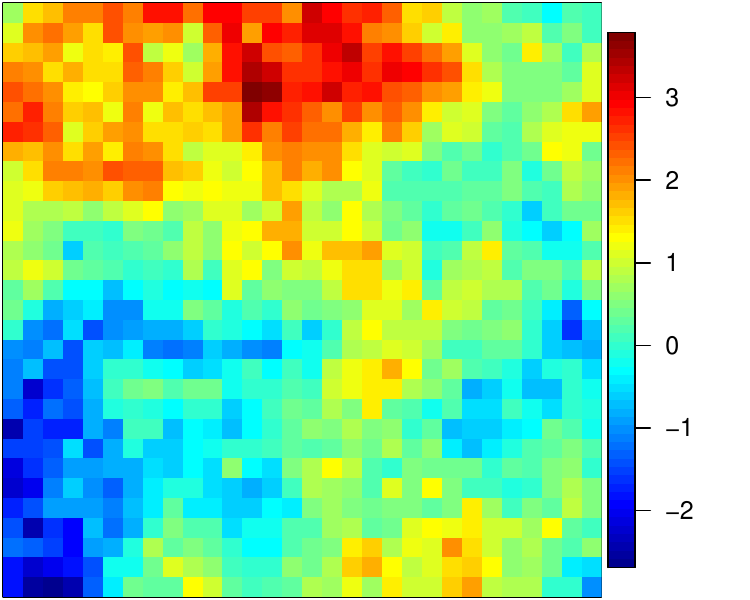} \\
	\includegraphics[width = 0.98\linewidth]{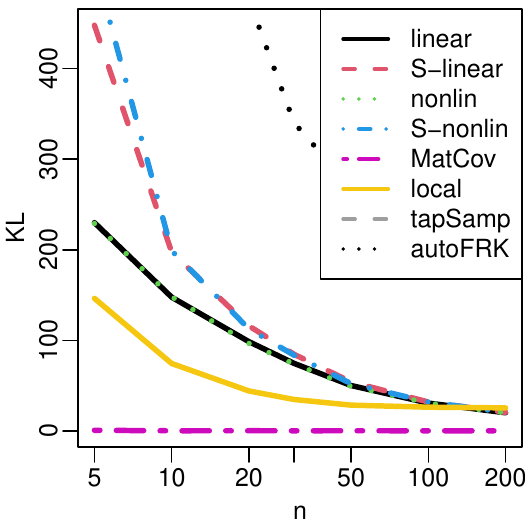}
	\caption{LR900}
	\label{fig:ms1}
	\end{subfigure}
	\hfill
	\begin{subfigure}{.244\textwidth}
	\centering
	\includegraphics[trim=-22mm 0mm 7mm 0mm, clip,width =.98\linewidth]{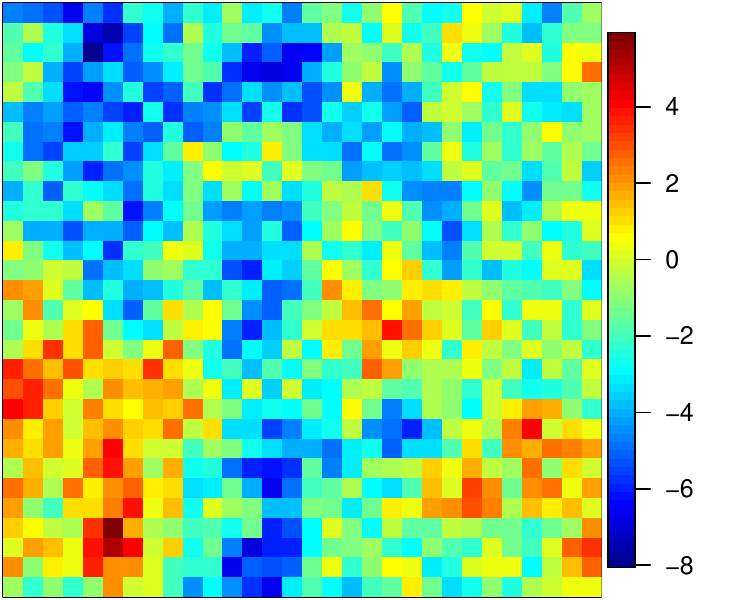} \\
 	\includegraphics[width = 0.98\linewidth]{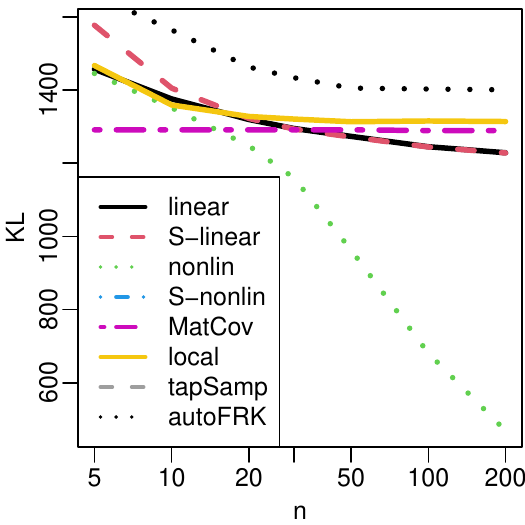}
 	\caption{NR900}
	\label{fig:ms2}
	\end{subfigure}
	\hfill
	\begin{subfigure}{.244\textwidth}
	\centering
	 \includegraphics[trim=-22mm 0mm 7mm 0mm, clip,width =.98\linewidth]{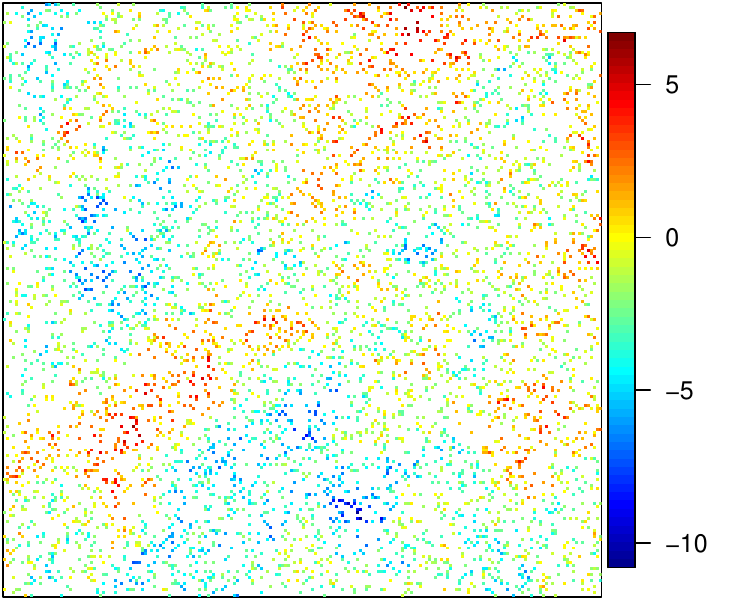} \\
 	\includegraphics[width = 0.98\linewidth]{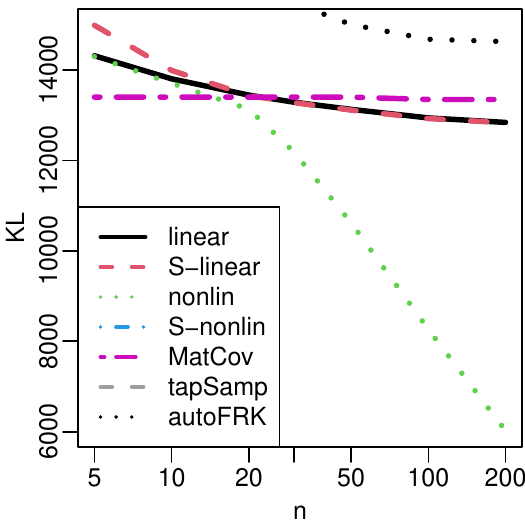}
 	\caption{NI3600}
	\label{fig:ms3}
	\end{subfigure}
	\hfill
	\begin{subfigure}{.244\textwidth}
	\centering
	\includegraphics[trim=-22mm 0mm 7mm 0mm, clip,width =.98\linewidth]{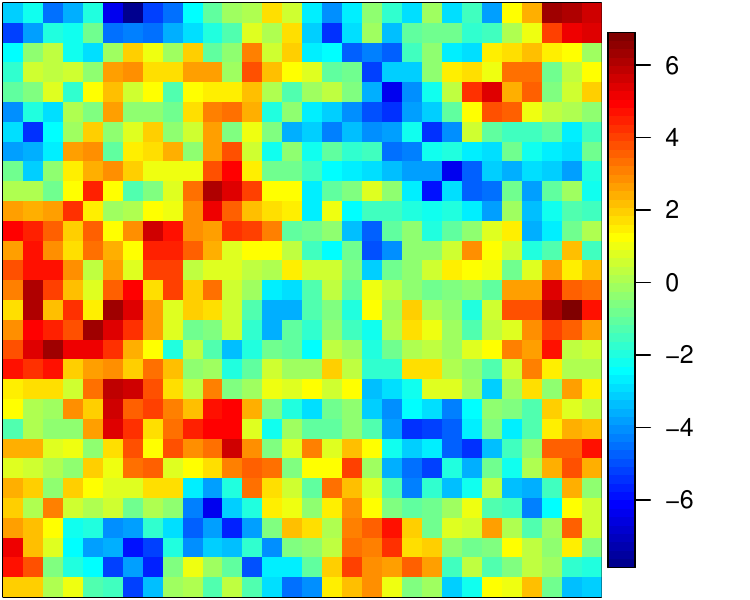} \\
 	\includegraphics[width = 0.98\linewidth]{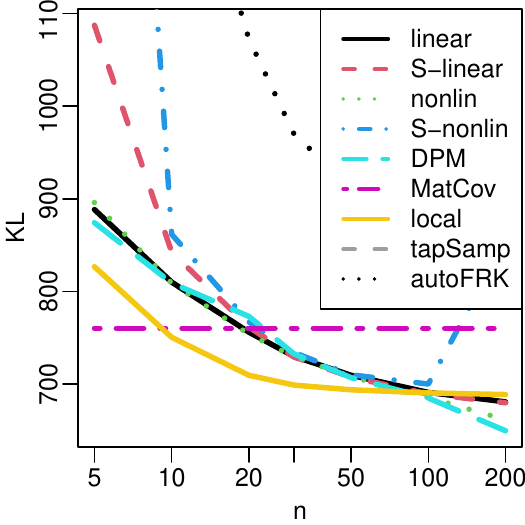}
 	\caption{NR900B}
	\label{fig:ms4}
	\end{subfigure}
  \caption{Top row: Simulated spatial fields for four simulation scenarios described in Section \ref{sec:simstudy}. Bottom row: Corresponding comparisons of KL divergence as a function of ensemble size $n$ (on a log scale) for different methods. The KL divergences for \texttt{tapSamp} in (a)--(d) and for \texttt{S-nonlin} in (b)--(c) were too high and are not visible. \texttt{DPM} is only included in (d), while \texttt{local} is omitted from (c) because it was created for regular grids.}
\label{fig:maternsine}
\end{figure}

We compared the accuracy of the methods via the Kullback-Leibler (KL) divergence,
\begin{equation}
\label{eq:kl}
E( \log p_0(\by) )  - E(\log p(\by|\bY) ),
\end{equation}
between the true distribution $p_0(\by)$ and the inferred distribution $p(\by|\bY)$ implied by the posterior map (see \eqref{eq:tstar}), where the expectations are taken with respect to the true distribution. We approximated the expectations by averaging over 50 simulated test fields $\by^\star$, and so the resulting KL divergence is the difference of the log-scores \citep[e.g.,][]{Gneiting2014} of the true and inferred distributions. 

The results are shown in Figure \ref{fig:maternsine}. Whenever nonlinear structure was not discernible from the data (because the true map was linear or because the ensemble size $n$ was too small), \texttt{nonlin} performed similarly to \texttt{linear} and hence did not suffer due to its over-flexibility. For larger ensemble size and nonlinear truths, \texttt{nonlin} at times far outperformed \texttt{linear}. \texttt{S-linear} and \texttt{S-nonlin} were generally less accurate than their counterparts with uncertainty quantification; in the linear LR900 setting, this was only an issue for small ensemble size, but \texttt{S-nonlin} performed extremely poorly when the nonlinear structure was clearly apparent in the data, likely due to overfitting without accounting for uncertainty. 
\texttt{tapSamp} and \texttt{autoFRK} performed uniformly worst.
As \texttt{MatCov} (with smoothness 0.5) is the true model for LR900, it was almost exact in that scenario. The other three scenarios are extensions of a Mat\'ern GP, and so \texttt{MatCov} also performed well for $n<20$ or so. The \texttt{local} Mat\'ern method was less accurate than \texttt{MatCov} for LR900 and NR900 but performed well for NR900B. For simulation scenarios that deviate more strongly from a Mat\'ern GP, \texttt{nonlin} was uniformly more accurate than \texttt{MatCov} and \texttt{local} (see Appendix \ref{app:simprod}).

Estimating $\bftheta$ via stochastic gradient ascent with 3 epochs and fitting the map based on $n=20$ samples took less than 7 seconds for the scenarios with $N=900$ and less than 44 seconds for the larger NI3600 scenario for \texttt{nonlin}, \texttt{linear}, \texttt{S-linear}, and \texttt{S-nonlin} on a single core on a laptop (2.5 GHz Intel Core i7 with 16GB RAM); \texttt{DPM} required a total of around 16 minutes for 500 MCMC iterations for NR900B.

\section{Climate-data application \label{sec:application}}

An important application of our methods is the analysis and emulation of output from climate models. Climate models are essentially large sets of computer code describing the behavior of the Earth system (e.g., the atmosphere) via systems of differential equations. Much time and resources have been spent on developing these models, and enormous computational power is required to produce ensembles (i.e., solve the differential equations for different starting conditions) on fine latitude-longitude grids for various scenarios of greenhouse-gas emissions. Of the large amount of data and output that have been generated, only a small fraction has been fully explored or analyzed \citep[e.g.][]{Benestad2017}.
Stochastic weather generators infer the distribution of one or more variables, so that relevant summaries or additional samples can be computed more cheaply than via more runs of the computer model.

\begin{figure}
\centering
\begin{minipage}{.94\textwidth}
\centering
	\begin{subfigure}{.495\linewidth}
	\centering
 	\includegraphics[width =.99\linewidth]{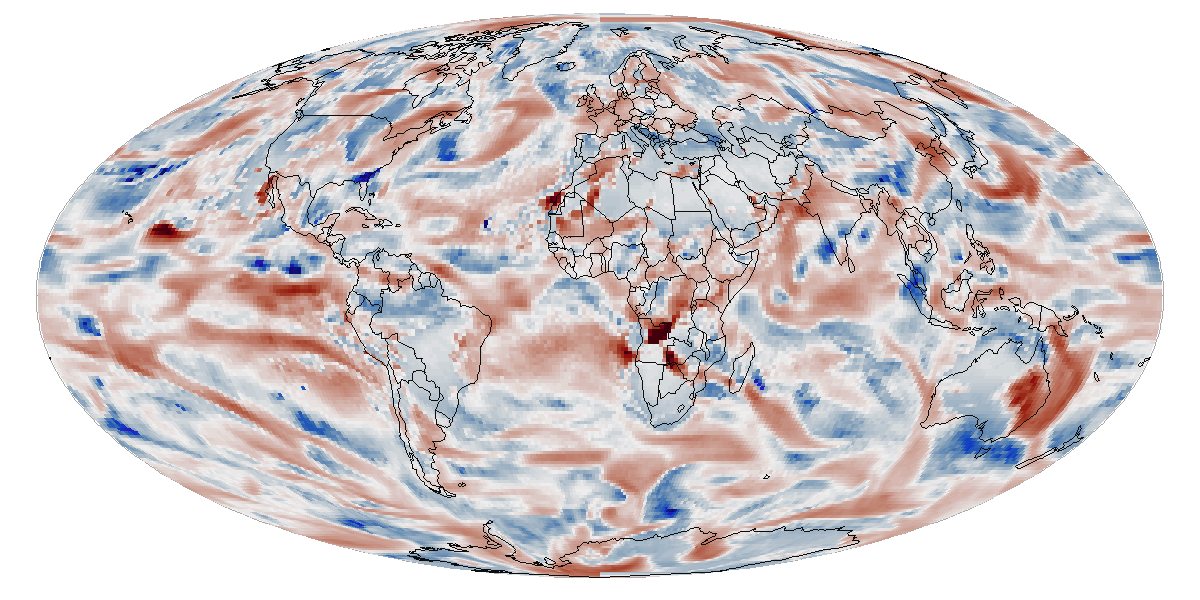}
	\end{subfigure}%
	\begin{subfigure}{.495\linewidth}
	\centering
 	\includegraphics[width =.99\linewidth]{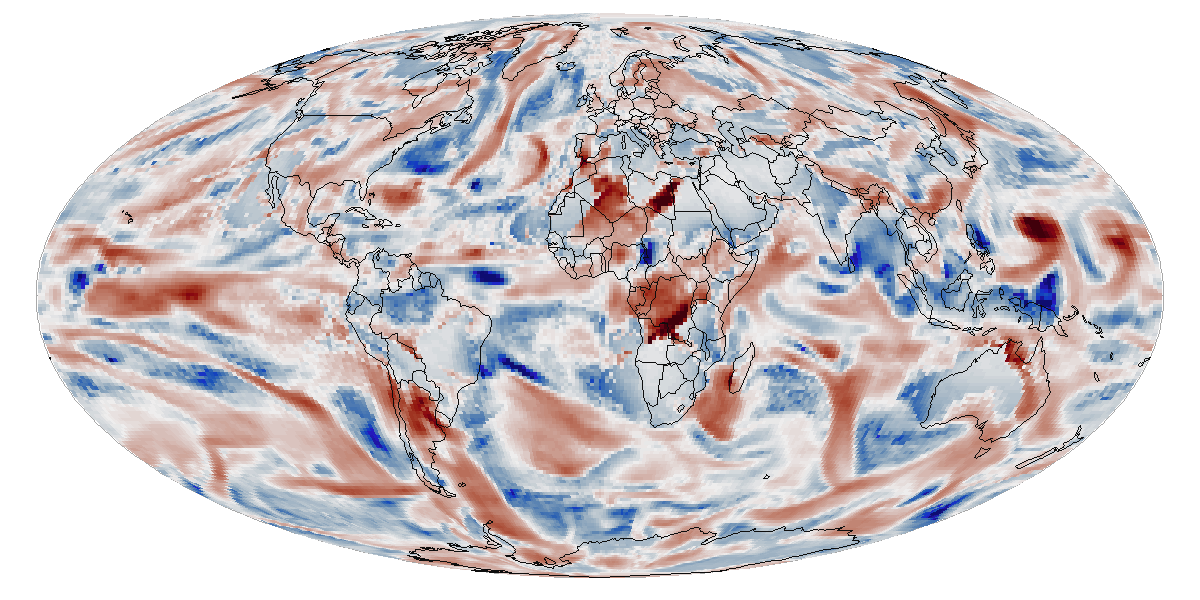}
	\end{subfigure}%
\end{minipage}
\hfill
\begin{minipage}{.04\textwidth}
%
\includegraphics[trim=400mm 0mm 0mm 0mm, clip,width =.55\linewidth]{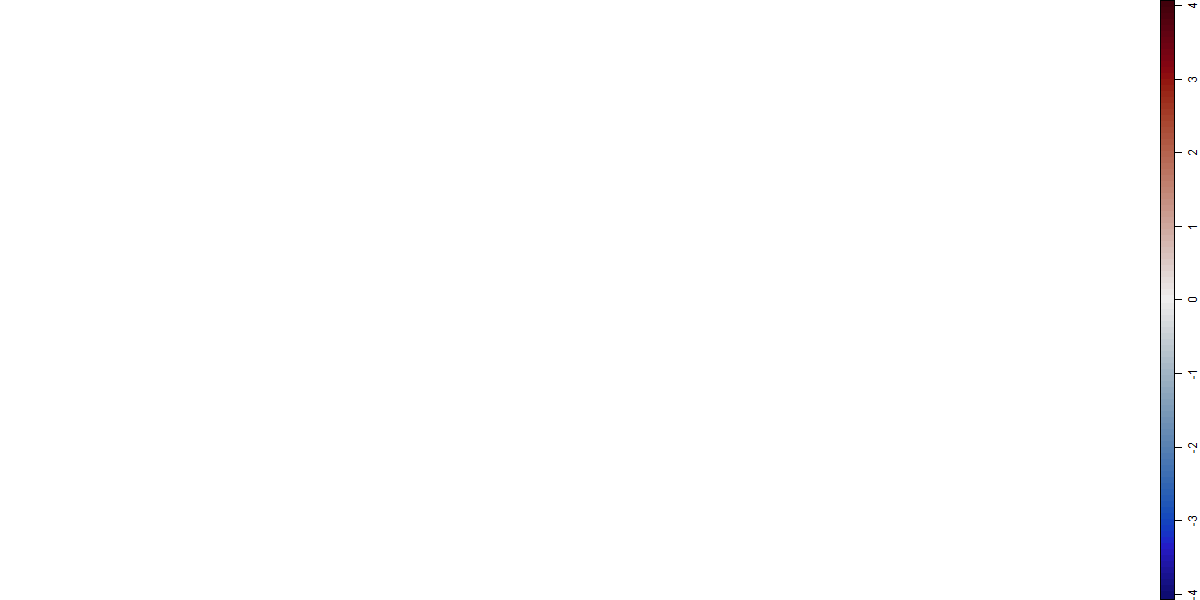} 
\end{minipage}	
  \caption{Two members of an ensemble of log-transformed precipitation anomalies produced by a climate model, on a global grid of size $N = 288 \times 192 = 55{,}296$. We want to infer the underlying $N$-dimensional distribution based on an ensemble of $n<100$ training samples.}
\label{fig:precdata}
\end{figure}

We considered log-transformed total precipitation rate (in m/s) on a roughly $1^\circ$ longitude-latitude global grid of size $N = 288 \times 192 = 55{,}296$ in the middle of the Northern summer (July 1) in $98$ consecutive years (the number of years contained in one NetCDF data file), starting in the year 402, from the Community Earth System Model (CESM) Large Ensemble Project \citep{Kay2015}.
We obtained precipitation anomalies by standardizing the data at each grid location to mean zero and variance one, shown in Figure \ref{fig:precdata}.
For our methods, we used chordal distance to compute the maximin ordering and nearest neighbors. 

For ease of comparison and illustration, we first considered a smaller grid of size $N = 37 \times 74 = 2{,}738$ in a subregion containing large parts of the Americas ($45^\circ$S to $45^\circ$N and $130^\circ$W to $30^\circ$W) containing ocean, land, and mountains.
As shown in Figure \ref{fig:precipprop}, the precipitation anomalies exhibited similar features as our simulated data in Figure \ref{fig:matern}, with regression data concentrating on lower-dimensional manifolds and weights decaying rapidly as a function of neighbor number.

\begin{figure}
\centering
~
\hfill	
	\begin{subfigure}{.36\textwidth}
	\centering
  	\includegraphics[trim=4mm 13mm 9mm 22mm, clip,width =.99\linewidth]{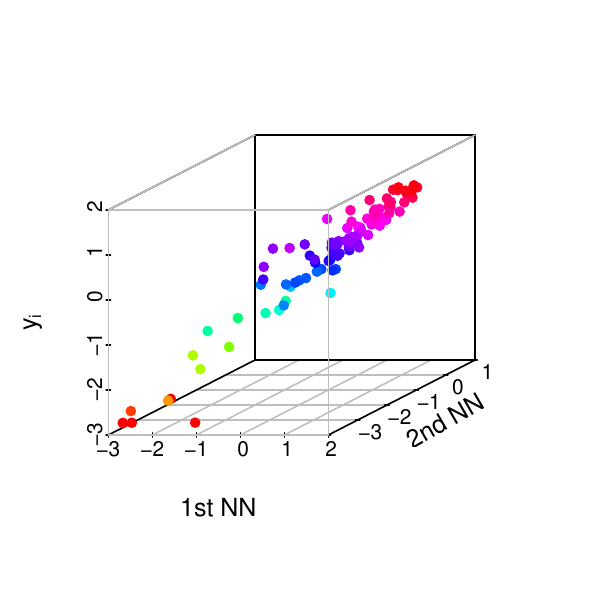}
	\caption{$\by_{2000}$ vs 1st and 2nd NN}
	\label{fig:pre3}
	\end{subfigure}%
\hfill	
	\begin{subfigure}{.264\textwidth}
	\centering
 	\includegraphics[width =.99\linewidth]{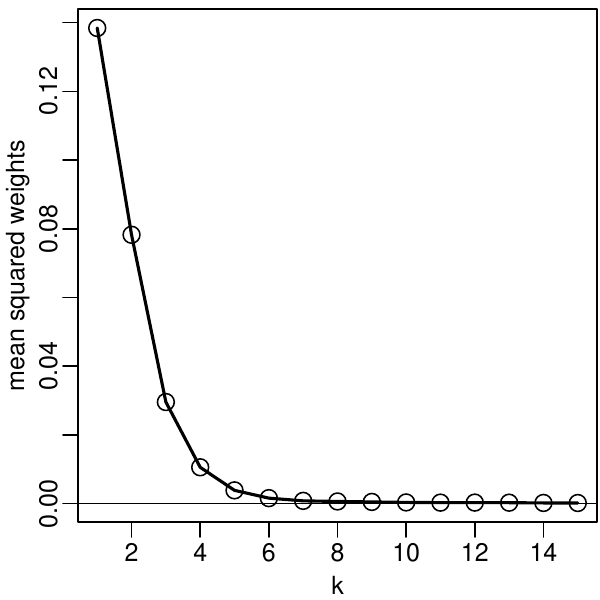}
	\caption{Avg.\ $\{\hat{b}_{i,k}^2: i\!=\!1,\ldots,N\}$}
	\label{fig:precweights}
	\end{subfigure}%
\hfill 
~
	\caption{
	The precipitation anomalies (in the Americas subregion) have similar properties as the Gaussian distribution with exponential covariance in Figure \ref{fig:matern}:
	(a) Our approach can be viewed as $N$ regressions as in \eqref{eq:regi} of each $y_i$ on ordered nearest neighbors (NNs), with the regression data on low-dimensional manifolds. 
    (b) For linear regressions with $f_i(\by_{1:i-1}) = \sum_{k=1}^{i-1} y_{c_i(k)}b_{i,k}$ fitted via Lasso, the squared (estimated) regression coefficients decay rapidly as a function of neighbor number $k$.}
\label{fig:precipprop}
\end{figure}

\begin{figure}
\centering
~
\hfill	
	\begin{subfigure}{.264\textwidth}
	\centering
 	\includegraphics[width =.99\linewidth]{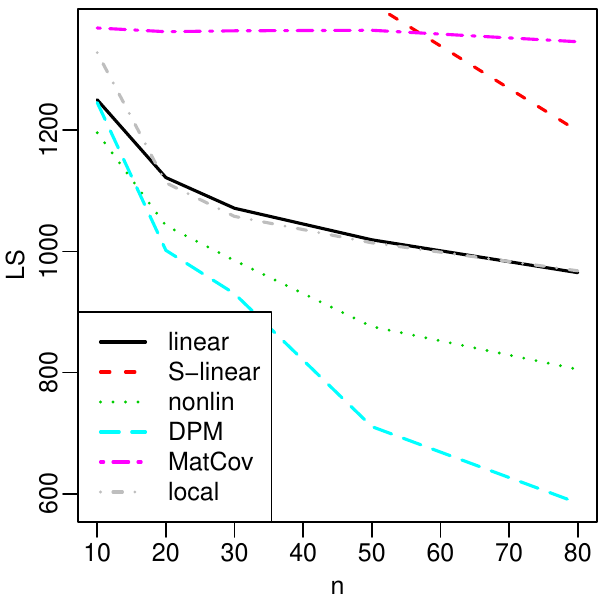}
	\caption{LS (Americas)}
	\label{fig:pre4}
	\end{subfigure}%
\hfill		
	\begin{subfigure}{.264\textwidth}
	\centering
 	\includegraphics[width =.99\linewidth]{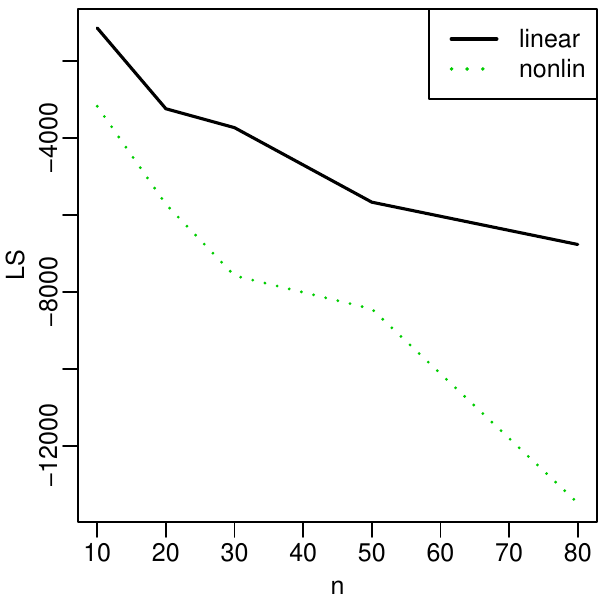}
	\caption{LS (global)}
	\label{fig:global2}
	\end{subfigure}%
\hfill 
~
\caption{
For precipitation anomalies, comparison of log-score (LS; equal to KL divergence up to an additive constant) for estimated joint distribution as a function of ensemble size $n$:
(a) Americas subregion; \texttt{S-nonlin}, \texttt{tapSamp}, and \texttt{autoFRK} are not shown because their LS were too high.
(b) LS for \texttt{linear} and \texttt{nonlin} for precipitation anomalies on the global grid.}
\label{fig:precipcomp}
\end{figure}

For comparing the methods from Section \ref{sec:simstudy} on the precipitation anomalies, computing the KL divergence as in \eqref{eq:kl} was not possible, as the true distribution $p_0(\by)$ was unknown. Hence, we compared the methods using various training data sizes $n$ in terms of log-scores, which approximate the KL divergence up to an additive constant; specifically, these log-scores consist of the second part of \eqref{eq:kl}, $- E(\log p(\by|\bY) )$, with the expectation approximated by averaging over 18 test replicates and over five random training/test splits. 

The comparison for the Americas subregion is shown in Figure \ref{fig:pre4}. (A prediction comparison for partially observed test data provided in Appendix \ref{app:pred} produced similar results.) \texttt{nonlin} outperformed \texttt{linear}, and \texttt{DPM} was even more accurate than \texttt{nonlin} for large $n$, indicating that the precipitation anomalies exhibit joint and marginal non-Gaussian features. 
As in Section \ref{sec:simstudy}, \texttt{S-linear} and \texttt{S-nonlin} performed poorly due to ignoring uncertainty in the estimated map. \texttt{local} performed similarly to \texttt{linear} but was less accurate than \texttt{nonlin} and \texttt{DPM} for all $n$.
\texttt{VAE}, \texttt{MatCov}, \texttt{tapSamp}, and \texttt{autoFRK} were not competitive for this dataset.

We also considered the map coefficients $\bz^\star = \pmap(\by^\star)$ discussed in Sections \ref{sec:invertiblemap} and \ref{sec:spatialinference}, using the map obtained by fitting \texttt{nonlinear} to the first $n=97$ replicates as training data.
In Figure \ref{fig:precip_coef}, the map coefficients for a held-out test field appeared roughly i.i.d.\ standard Gaussian, with sample autocorrelations near zero (not shown).
Figure \ref{fig:precipcoef} illustrates that the map coefficients offer similar properties for non-Gaussian fields as principal-component scores do for Gaussian settings. For example, the medians of the posterior distributions of the $d_i$ (see \eqref{eq:nigpost}) decreased rapidly as a function of $i$, which means that the map coefficients early in the maximin ordering captured much more (nonlinear) variation than later-ordered coefficients (see, e.g., \eqref{eq:invmap} and \eqref{eq:detmap}). Further, we computed the map coefficients for all 98 replicates for July 2--30 (still based on the posterior map trained on July 1 data), and the lag-1 autocorrelation over time between map coefficients also decreased with $i$. Specifically, while most of the first 100 were greater than 0.2, many later autocorrelations were negligible; this indicates that a spatio-temporal analysis could proceed by fitting a simple (linear) autoregressive model over time to only the first $k$, say, map coefficients, while treating the remaining $N-k$ coefficients as independent over time.
As shown in Appendix \ref{app:recon}, the \texttt{nonlinear} map coefficients strongly outperformed standard linear principal components in terms of dimension reduction and reconstruction of the Americas climate fields.

\begin{figure}
\centering
~
\hfill	
	\begin{subfigure}{.26\textwidth}
	\centering
  	\includegraphics[width =.95\linewidth]{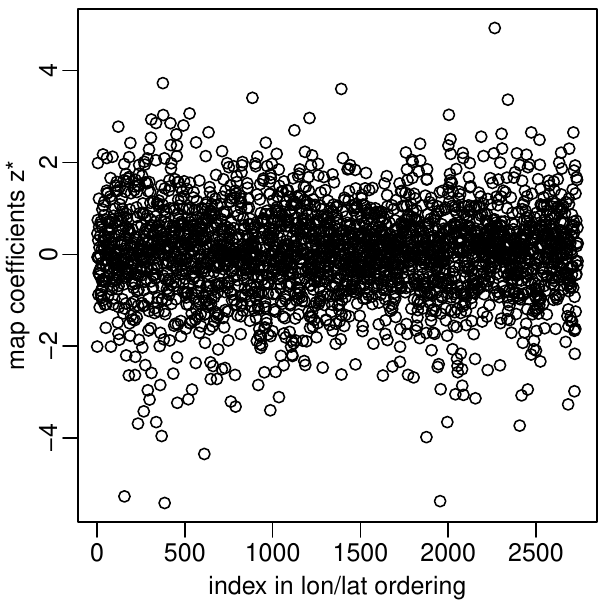}
	\caption{Test map coefficients $\bz^\star$}
	\label{fig:precip_coef}
	\end{subfigure}%
\hfill	
	\begin{subfigure}{.26\textwidth}
	\centering
  	\includegraphics[width =.95\linewidth]{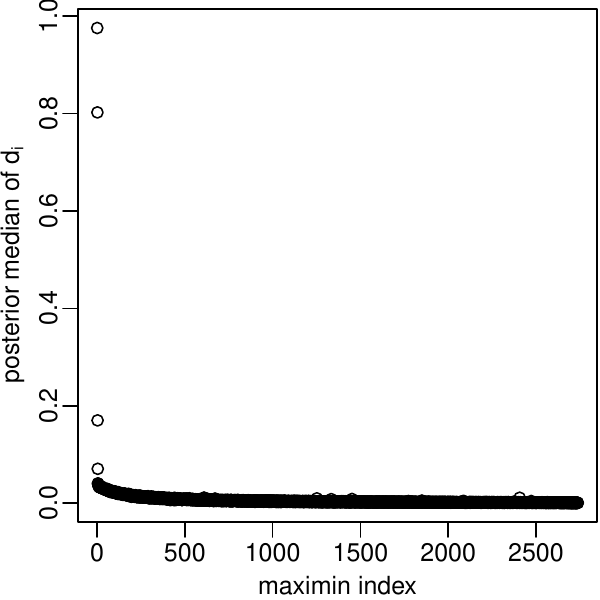}
	\caption{Posterior median of $d_i$}
	\label{fig:precip_di}
	\end{subfigure}%
\hfill	
	\begin{subfigure}{.26\textwidth}
	\centering
 	\includegraphics[width =.95\linewidth]{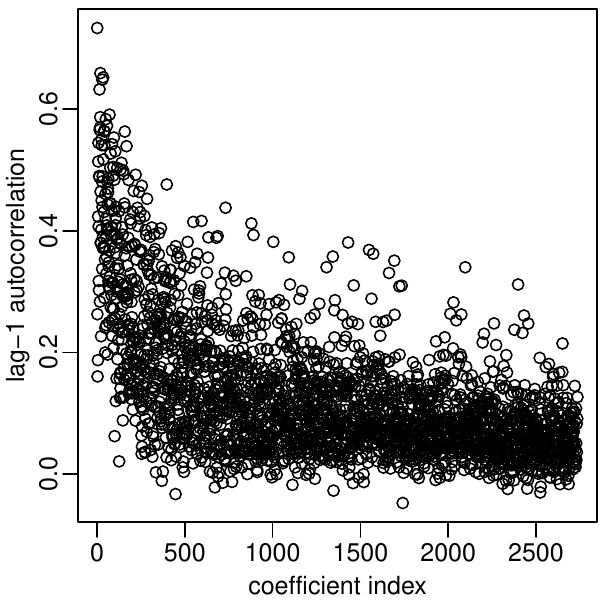}
	\caption{Lag-1 autocorrelation}
	\label{fig:precip_acf}
	\end{subfigure}%
\hfill 
~
	\caption{Properties of the map coefficients $\bz^\star = \pmap(\by^\star)$ for the precipitation anomalies on the grid of size $N=2{,}738$ in the Americas subregion.
	(a): The map coefficients corresponding to the test field in Figure \ref{fig:prec1} in the original data ordering (first by longitude, then latitude) appeared roughly i.i.d.\ standard Gaussian, aside from slightly heavy tails.
	(b): The posterior medians of the $d_i$ decreased rapidly as a function of $i$, meaning that the first few map coefficients captured much more variation than later-ordered coefficients. 
	(c): The autocorrelation between consecutive days also decreased with $i$; while most were greater than 0.2 for $i<100$, many autocorrelations for $i>100$ were negligible.}
\label{fig:precipcoef}
\end{figure}

\begin{figure}
\centering
\begin{minipage}{.94\textwidth}
\centering
	\begin{subfigure}{.495\linewidth}
	\centering
 	\includegraphics[width =.99\linewidth]{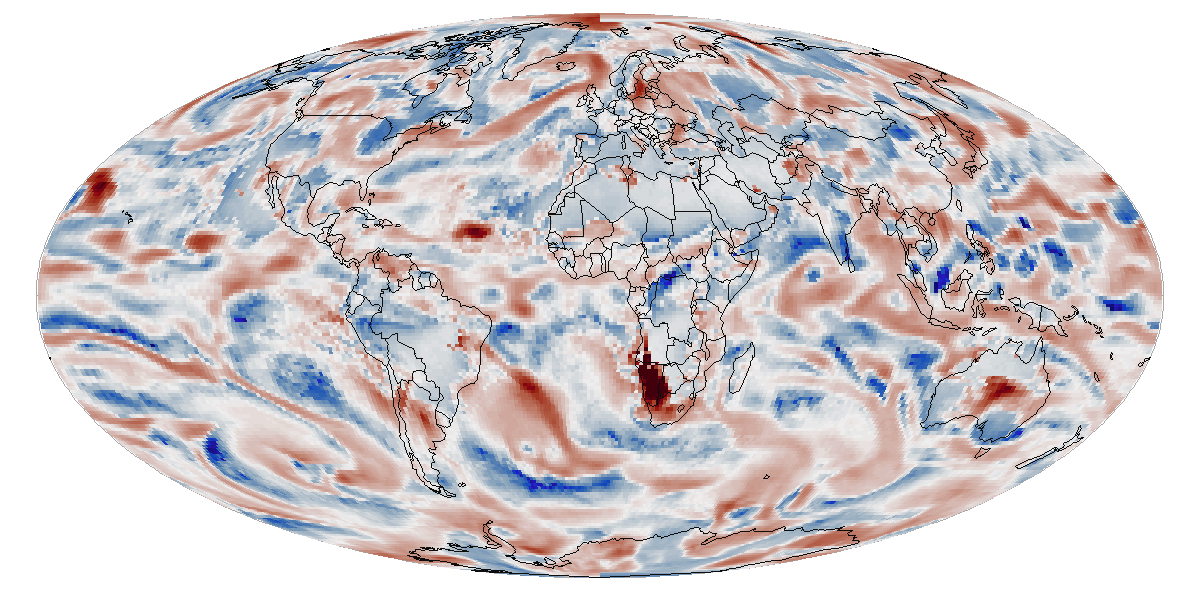}
 	\caption{Test field ($i=N$)}
 	\label{fig:prec1}
	\end{subfigure}%
	\begin{subfigure}{.495\linewidth}
	\centering
 	\includegraphics[width =.99\linewidth]{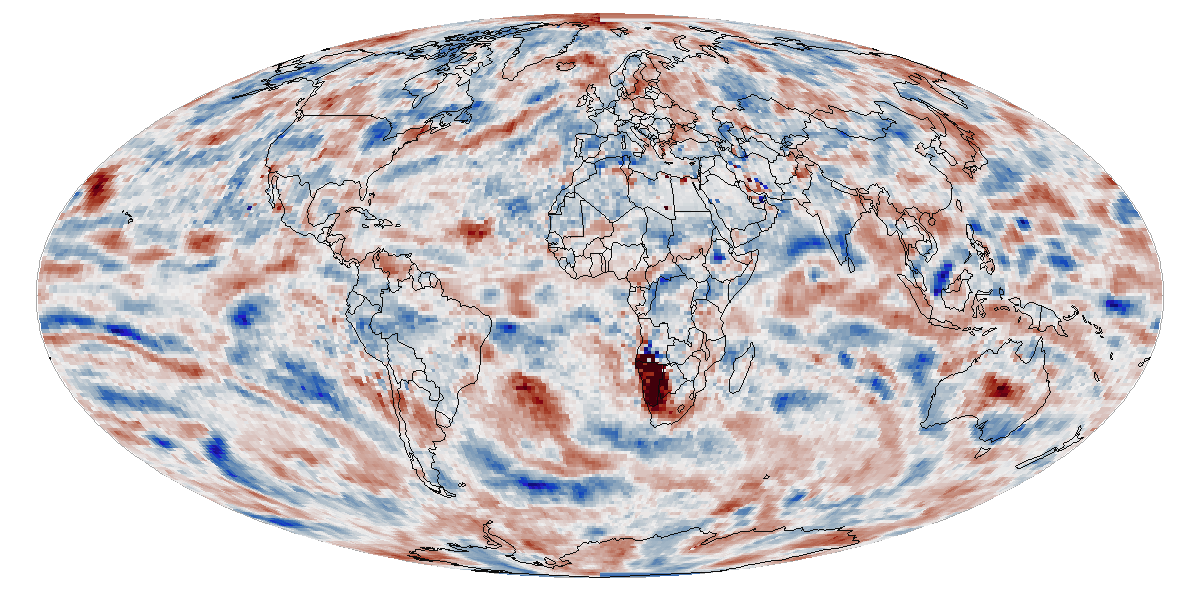}
 	\caption{$i=5{,}000$}
	\end{subfigure}%
	\\
	\smallskip
	\begin{subfigure}{.495\linewidth}
	\centering
 	\includegraphics[width =.99\linewidth]{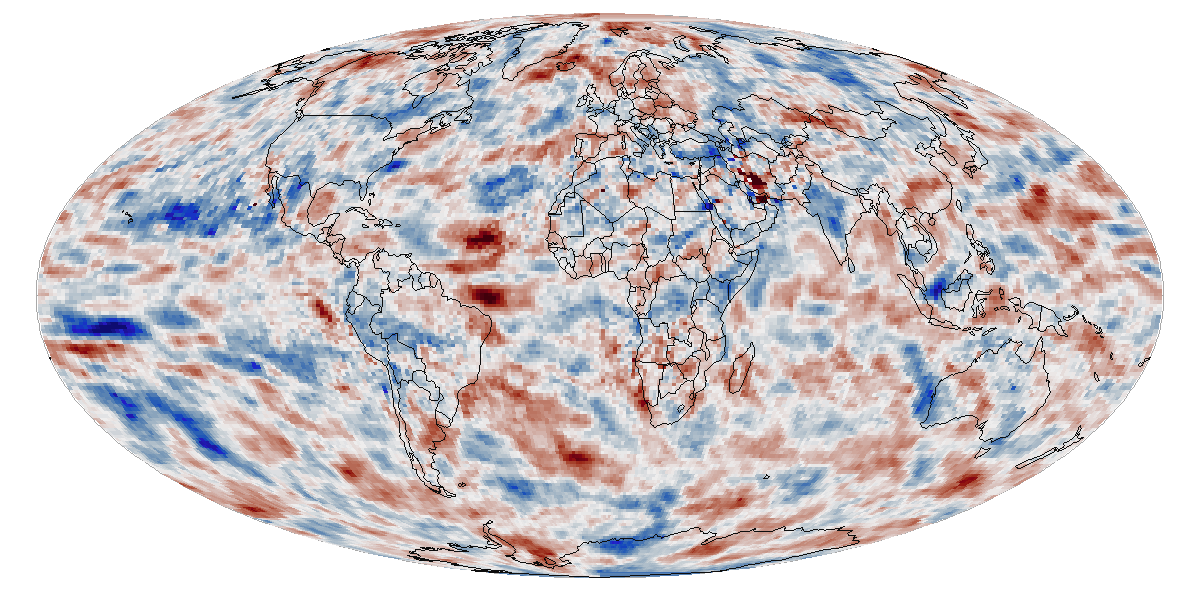}
 	\caption{$i=500$} 	
	\end{subfigure}%
	\begin{subfigure}{.495\linewidth}
	\centering
 	\includegraphics[width =.99\linewidth]{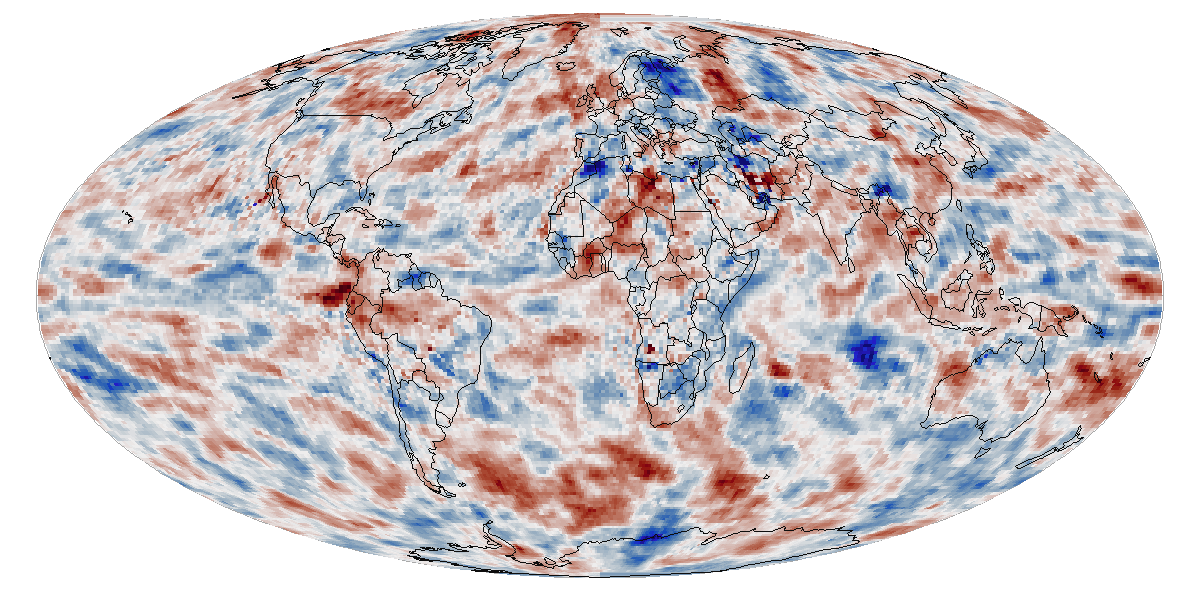}
 	\caption{Unconditional ($i=0$)}
 	\label{fig:prec4}
	\end{subfigure}%
\end{minipage}
\hfill
\begin{minipage}{.05\textwidth}
%
\includegraphics[trim=400mm 0mm 0mm 0mm, clip,width =.98\linewidth]{precip_global_legend.png} 
\end{minipage}	
  \caption{For the global climate data ($N=55{,}296$), we fitted a stochastic emulator using \texttt{nonlin} based on $n=97$ training replicates. Given a held-out test field $\by^*$ in (a), we show conditional simulations based on fixing the first $i$ map coefficients in $\bz^* = \pmap(\by^\star)$. (b): Only differs in some fine-scale features from (a). (c): Some large-scale features from (a) are preserved. (d): Unconditional simulation (i.e., independent from (a)).}
\label{fig:precipcond}
\end{figure}

To demonstrate scalability to large datasets, we compared \texttt{linear} and \texttt{nonlinear} on the entire global precipitation anomaly fields of size $N = 288 \times 192 = 55{,}296$. As shown in Figure \ref{fig:global2}, \texttt{nonlin} outperformed \texttt{linear} even more decisively than for the Americas subregion. Even in the largest and most accurate setting ($n=80$), the estimated $\bftheta$ for \texttt{nonlin} implied $m=9$, meaning that the corresponding transport maps were extremely sparse and hence computationally efficient; estimating $\bftheta$ (4 epochs) and fitting the map for \texttt{nonlin} took only around 6 minutes on a single core on a laptop (2.5 GHz Intel Core i7 with 16GB RAM) for $n=10$. 
In contrast, \texttt{MatCov} and \texttt{local} (which already took about two hours for the much smaller Americas region) were too computationally demanding for the global data. A Vecchia approximation of \texttt{MatCov} resulted in a log-score above +77,000 and was thus not competitive.
Also, for \texttt{nonlin} all but 113 of the $N=55{,}296$ posterior medians of the $d_i$ were more than 20 times smaller than the largest posterior median (i.e., that of $d_1$), indicating that our approach could be used for massive dimension reduction without losing too much information.

Finally, the fitted map (or rather, its inverse $\pmap^{-1}$) can also be viewed as a stochastic emulator of the climate model. Specifically, we can produce a new precipitation-anomaly sample by drawing $\bz^* \sim \normal_N(\bfzero,\bI_N)$ and then computing $\by^* = \pmap^{-1}(\bz^*)$. One such sample (for the full global grid) is shown in Figure \ref{fig:prec4} and appears qualitatively similar to the model output in Figure \ref{fig:precdata}; while producing the latter requires a supercomputer, the former can be generated in a few seconds on a laptop. 
Further, our approach can also be used to draw conditional samples, in which we fix the first $i$, say, map coefficients, for example at the values corresponding to a given spatial field. Such draws, which maintain the large-scale features in the held-out (98th) test field but allow for newly sampled fine-scale features, are shown in Figure \ref{fig:precipcond}. 
This is related to the supervised conditional sampling ideas in \citet{Kovachki2020}, with their inputs given by our first $i$ ordered test observations.

\section{Conclusions\label{sec:conclusions}}

We have developed a Bayesian approach to inferring a non-Gaussian target distribution via a transport map from the target to a standard normal distribution. The components of the map are modeled using Gaussian processes. For the distribution of spatial fields, we have developed specific prior assumptions that result in sparse maps and thus scalability to high dimensions. 
Instead of manually or iteratively expanding a finite-dimensional parameterization of the transport map, our Bayesian approach probabilistically regularizes the map; the resulting approach is flexible and nonparametric, but guards against overfitting and quantifies uncertainty in the estimation of the map. 
Because our method can be fitted rapidly, is fully automated, and was highly accurate in our numerical comparisons, we recommend it for most spatial emulation tasks, except for applications in which very few replicates are available or for which exploratory analyses have shown that a (Gaussian) parametric approach can provide a good fit.
In addition, due to conjugate priors and the resulting closed-form expressions for the posterior map and its inverse, our approach also allows us to convert non-Gaussian data into i.i.d.\ Gaussian map coefficients, which can be thought of as a nonlinear extension of principal components.

As our approach essentially turns estimation of a high-dimensional joint distribution into a series of GP regressions, it is straightforward to include additional covariates and examine their nonlinear, non-Gaussian effect on the distribution.
Shrinkage toward a joint Gaussian distribution with a parametric covariance function could be achieved by assuming the mean for the GP regressions to be the one implied by a Vecchia approximation of that covariance function \citep{Kidd2020}; this could enable meaningful predictions at unobserved spatial locations (cf.\ Appendix \ref{app:pred}).
Extensions to more complicated input domains (e.g., space-time) could be obtained using correlation-based ordering (Section \ref{sec:maximin}). 
Another major avenue of future work would be to use the inferred distribution as the prior of a latent field, which we then update to obtain a posterior given noisy observations; among numerous other applications, this would enable the use of our technique to infer the forecast distribution and account for uncertainty in ensemble-based data assimilation \citep{Boyles2020}, leading to nonlinear updates for non-Gaussian applications. 
We are currently pursuing multiple extensions and applications of our methods to climate science, including climate-change detection and attribution, climate-model calibration, and climate-model emulation and interpolation in covariate space (e.g., as a function of CO$_2$ emissions).

\footnotesize
\appendix
\section*{Acknowledgments}

Katzfuss was partially supported by National Science Foundation (NSF) Grants DMS--1654083, DMS--1953005, and CCF--1934904, and by NASA's Advanced Information Systems Technology Program (AIST-21). Sch{\"a}fer gratefully acknowledges support by the Air Force Office of Scientific Research under award number FA9550-18-1-0271, and the Office of Naval Research under award N00014-18-1-2363. We would like to thank Joe Guinness and several reviewers for helpful comments. We are especially grateful to Jian Cao, who wrote a Python implementation, produced timing results, and obtained GAN and VAE results, and to Trevor Harris, who created the VAE implementation for our numerical comparisons.





\section{Proofs\label{app:proofs}}

\begin{proof}[Proof of Proposition \ref{prop:maps}]
Combining \eqref{eq:map2norm} with the conditional independence of $\by^{(1)},\ldots,\by^{(n)}$, we have
\begin{equation}
    \label{eq:reg}
\textstyle p(\bY|\bf,\bd) = \prod_{i=1}^N \prod_{j=1}^n \normal(y_i^{(j)}|f_i(\by_{1:i-1}^{(j)}),d_i^2) = \prod_{i=1}^N \normal_n(\by_i|\bf_i,d_i^2\bI_n),
\end{equation}
where $\bf_i = f_i(\bY_{1:i-1}) = \big(f_i(\by_{1:i-1}^{(1)}),\ldots,f_i(\by_{1:i-1}^{(n)}) \big)^\top$ is distributed as $\bf_i|d_i,\bY_{1:i-1} \sim \normal(\bfzero,d_i^2\bK_i)$. 
Combined with \eqref{eq:dprior}, we see that $\bf_i, d_i$ (conditional on $\bY_{1:i-1}$) jointly follow a (multivariate) normal-inverse-gamma (NIG) distribution, independently for each $\bf_i,d_i$. 
Given the data $\bY$ as in \eqref{eq:reg}, well-known conjugacy results imply that the posterior of $\bF = (\bf_1,\ldots,\bf_N)$ and $\bd=(d_1,\ldots,d_N)$ also consists of independent NIG distributions:
\begin{equation}
\textstyle p(\bF,\bd|\bY)  
\propto \prod_{i=1}^N p(\by_i|\bf_i,d_i) p(\bf_i|d_i,\bY_{1:i-1}) \, p(d_i) \propto \prod_{i=1}^N \normal(\bf_i|\hat\bf_i,d_i^2 \tilde\bK_i) \, \mathcal{IG}(d_i^2|\tilde\alpha_i,\tilde\beta_i), \label{eq:nigpost}
\end{equation}
where $\tilde\bK_i = \bK_i - \bK_i \bG_i^{-1}\bK_i$ and $\hat\bf_i = \bK_i \bG_i^{-1} \by_i$.

We have
$p(\by^\star|\bf, \bd) = \prod_{i=1}^N \normal(y_i^\star|f_i(\by_{1:i-1}^*),d_i^2)$ using \eqref{eq:map2norm}. Combining this with \eqref{eq:nigpost} and the conditional-independence assumption in \eqref{eq:gp}, the posterior predictive distribution can be shown to be
\[
p(\by^\star|\bY) = \prod_{i=1}^N \int p(y_i^\star|\by_{1:i-1}^*,\bY,d_i) p(d_i|\bY) d d_i,
\]
where basic GP regression implies
\begin{equation}
\label{eq:condpred}
y_i^\star|\by^\star_{1:i-1},\bY,d_i \sim \normal\big(\hat f_i(\by^\star_{1:i-1}), d_i^2( v_i(\by^\star_{1:i-1})+1 )\big), \qquad i=1,\ldots,N.
\end{equation}
Combining this with $d_i^2 | \bY \sim \mathcal{IG}(\tilde\alpha_i,\tilde\beta_i)$ from \eqref{eq:nigpost},
we obtain the posterior predictive distribution as a product of $t$ densities,
\begin{equation}
    \label{eq:tstar}
    \textstyle p(\by^\star|\bY) = \prod_{i=1}^N t_{2\tilde\alpha_i}\big(y_i^\star\big|\hat f_i(\by^\star_{1:i-1}), \hat d_i^2( v_i(\by^\star_{1:i-1})+1 )\big),
\end{equation}
where our notation is such that $w \sim t_{\kappa}(\mu,\sigma^2)$ implies that $(w - \mu)/\sigma$ follows a ``standard'' $t$ with $\kappa$ degrees of freedom.
Hence, $\hat d_i^{-1} (v_i(\by^\star_{1:i-1})+1)^{-1/2}(y_i - \hat f_i(\by^\star_{1:i-1}))$ follows a $t_{2\tilde\alpha_i}$ distribution.
Using the fact that we can map from a distribution to the standard uniform using its cumulative distribution, the transformation
$\bz^\star = \pmap(\by^\star) \sim \normal_N(\bfzero,\bI_N)$ to a standard normal can be described using a triangular map with components
\begin{equation}
\label{eq:singlemap2}
z_i^\star = \pmap_i(y_1^\star,\ldots,y_i^\star) = \Phi^{-1}\big( F_{2\tilde\alpha_i}\big( \hat d_i^{-1} (v_i(\by^\star_{1:i-1})+1)^{-1/2}(y_i^\star - \hat f_i(\by^\star_{1:i-1})) \big)\big).
\end{equation}
The solution $\by^\star$ to the nonlinear triangular system $\pmap(\by^\star) =\bz^\star$ is found recursively by solving \eqref{eq:singlemap2} for $y_i^\star$:
\begin{equation}
y_i^\star = F_{2\tilde\alpha_i}^{-1}(\Phi(z_i^\star))\, \hat d_i (v_i(\by_{1:i-1}^\star)+1)^{1/2} + \hat f_i(\by_{1:i-1}^\star).
\end{equation}
\end{proof}

\begin{proof}[Proof of Proposition \ref{prop:lik}]
From \eqref{eq:gp}, we have that $\bf_i | d_i \stackrel{ind.}{\sim} \normal_n(\bfzero,d_i^2\bK_i)$; together with \eqref{eq:reg}, this implies that $\by_i|d_i,\bY_{1:i-1} \stackrel{ind.}{\sim}  \normal_n(\bfzero,d_i^2\bG_i)$. Combining this with \eqref{eq:dprior}, it is well known that $\by_i | \bY_{1:i-1} \stackrel{ind.}{\sim} t_{2\alpha_i}(\bfzero,\frac{\beta_i}{\alpha_i}\bG_i)$, where we define a multivariate $t$ distribution such that $\bw \sim t_{\kappa}(\bfmu,\bfSigma)$ implies that the entries of $\bfSigma^{-1/2}(\bw - \bfmu)$ are i.i.d.\ standard $t$ with $\kappa$ degrees of freedom. 
Plugging in the $t$ densities and simplifying using $\tilde\alpha_i = \alpha_i + n/2$, 
$\tilde\beta_i = \beta_i + \by_i{}^\top \bG_i^{-1} \by_i/2$, we can obtain
\begin{align}
  p(\bY) 
  & = \textstyle \prod_{i=1}^N t_{2\alpha_i}(\by_i |\bfzero,\frac{\beta_i}{\alpha_i}\bG_i)\\
  & \propto \textstyle \prod_{i=1}^N \Gamma(\tilde\alpha_i) \big( \Gamma(\alpha_i) (\alpha_i \beta_i/\alpha_i)^{n/2} |\bG_i|^{1/2}\big)^{-1} \big( 1+ \alpha_i/(\beta_i 2 \alpha_i) \by_i^\top \bG_i^{-1}\by_i \big)^{-\tilde\alpha_i}\\
  & \propto \textstyle\prod_{i=1}^N \big( \, |\bG_i|^{-1/2} \times ({\beta_i^{\alpha_i}}/{\tilde\beta_i^{\tilde\alpha_i}}) \times {\Gamma(\tilde\alpha_i)}/{\Gamma(\alpha_i)} \, \big),
\end{align}
where $\Gamma(\cdot)$ denotes the gamma function.
\end{proof}

\section{Conditional near-Gaussianity for quasiquadratic loglikelihoods\label{app:quasilinear}}

A Gaussian process with precision operator $A$ and mean $\mu$ has a quadratic negative loglikelihood given by $u \mapsto \frac{1}{2} \langle u, A u\rangle - \langle b, u\rangle$, with $b = A \mu$.
It is therefore closely related to the solution of systems of equation in $A$. 
For many popular smooth function priors, such as the Mat{\'e}rn process, the precision operator is a linear elliptic partial differential operator.
Just like minimizers of the quadratic energy $u \mapsto \frac{1}{2} \langle u, A u\rangle - \langle b, u\rangle$ satisfy the linear equation $Au = b$, the minimizers of quasiquadratic energies 
\begin{equation}
       \mathcal{E}(u) = \frac{1}{2}\langle u, L\left(D^ru\right)\rangle + V\left(D^{r-1}u, \ldots, u\right).
\end{equation}
with $L$ linear and $V$ possibly nonlinear, satisfy the quasilinear PDE 
\begin{equation}
    \label{eqn:quasilinear}
    L\left(D^ru\right) = - \frac{d}{du} V\left(D^{r-1}u, \ldots, u\right).  
\end{equation}

A natural stochastic model for phenomena governed by \eqref{eqn:quasilinear} is then the Boltzmann distribution at a finite temperature $T$; not concerning ourselves with the technical difficulties of a rigorous definition in the continuous case, the distribution is characterized by the likelihood
\begin{equation}
    \label{eqn:CHdensity}
    p(u) \propto \exp(- \mathcal{E}(u) / T).
\end{equation}

\Citet{cahn1958free} derived expressions of the form 
\begin{equation}
    \label{eqn:CHenergy}
    \mathcal{E}(u) = \int \nu |\nabla u(x)|^2 + (1 - u(x)^2)^2 dx
\end{equation}
for the free energy of a binary alloy, which has subsequently been applied to numerous other problems including multi-phase flows \citep{badalassi2003computation} and polymers \citep{choksi2009phase}.
We will use this energy and the associated Boltzmann distribution as an example to illustrate the conditional Gaussianity stochastic processes with quasiquadratic loglikelihoods.

In \eqref{eqn:CHenergy}, the field $u$ represents a mixture of two species.
The first, leading term of $\mathcal{E}(u)$ favors smooth functions by penalizing drastic jumps in concentration among nearby points.
The second term of $\mathcal{E}(u)$ describes a tendency of the two species to avoid mixing, favouring $u(x) = -1$ (mostly the first species) or $u(x) = 1$ (mostly the second species). 
When imposing the constraint that the overall abundance of the two species be equal ($\int u dx = 0$), minimizers of $\mathcal{E}$ need to carefully balance having values close to $\{-1, 1\}$ while also varying slowly in space, resulting in the formation of distinct positive or negative regions, the size of which is determined by the choice of $\nu$.
For each $x$, the distribution of $u(x)$ is then a mixture (in the probability-theoretic sense of the word) of the behavior of a positive region and that of a negative region, leading to a non-Gaussian marginal distribution.
If, however, we condition the process on averages over subdomains of decreasing diameter $\ell < 1$, we observe that the conditional distributions quickly become Gaussian.
The intuitive explanation for this phenomenon is that these averages contain enough information to determine, with high probability, whether a given point is part of a positive or negative region. 
Thus, its conditional distribution is dominated by the approximately Gaussian fluctuation around either a positive or negative value, as opposed to a mixture of these two distributions.
In Figure~\ref{fig:CHconditioning}, we used the preconditioned Crank-Nicholson proposal \citep{cotter2013mcmc} to simulate draws from such a Cahn-Hilliard process on a grid of $64 \times 64$ locations and plot the standardized histogram for a single location.
Our experimental results confirm our intuition that conditioning on averages over fine scales leads to increasingly Gaussian conditional distributions. 

This phenomenon can be understood in terms of the well-known Poincar{\'e} inequality \citep[Theorem 4.12]{adams2003sobolev}:
\begin{lemma}[Poincar{\'e} inequality]
    Let $\Omega \subset \mathbb{R}^d$ be a Lipschitz-bounded domain with diameter $\ell$, let $u$ and its first derivative be square-integrable, and let $u_{\Omega} = \frac{1}{|\Omega|} \int_{\Omega} u(x) dx $ be the mean of $u$ over $\Omega$.
    Then, we have 
    \begin{equation}
        \|u - u_{\Omega}\|_{L^2\left(\Omega\right)} \leq \ell \|\nabla u\|_{L^2\left(\Omega\right)}.
    \end{equation}
\end{lemma} 

The Poincar{\'e} inequality directly implies the following corollary.
\begin{corollary}
Let $\Omega$ be a Lipschitz-bounded domain. Let $\tau$ be a partition of $\Omega$ into Lipschitz-bounded subdomains with diameter upper bounded by $\ell$ and assume that $u,v$ and their first derivatives are square-integrable and satisfy $\int_t (u - v) dx = 0$ for all $t \in \tau$.
Then,
\begin{equation}
        \|u - v\|_{L^2\left(\Omega\right)} \leq \ell \|\nabla u - \nabla v\|_{L^2\left(\Omega\right)}. 
\end{equation}
\end{corollary}

This means that conditional on averages over domains of diameter $\ell$, perturbations of $u$ with $L^2$-norm $\delta$ will necessarily lead to perturbations of $\nabla u$ with $L^2$-norm $\delta / \ell$. 
As $\ell$ decreases, the quadratic leading-order term thus becomes the dominant contribution to the loglikelihood. Accordingly, the conditional distribution becomes increasingly Gaussian. (To control the contribution of the nonlinearity, we need to bound the $L^4$ norm, as well as the $L^2$ norm. For $d \leq 3$, Ladyzhenskaya's inequality allows us to bound the $L^4$ norm in terms of the $L^2$ norm, using the fact that the Sobolev norm $H^1$ norm of $u$ is small, with high probability.)

The Cahn-Hilliard process is too rough for point-wise measurements to be defined, which is why we have used the conditioning on averages of scale $\ell \approx \ell^k$ as a substitute for conditioning on the first $k$ elements in the maximin ordering introduced in Section~\ref{sec:maximin}.
If the order $r$ of the elliptic PDE is larger than the spatial dimension $d$, and thus pointwise measurements are well-defined, estimates similar to \citet[][Lemma 1 and Theorem 1]{madych1985estimate} can be used instead of the Poincar{\'e} inequality to obtain similar results when conditioning on subsampled data. (See \citealp{owhadi2017universal,Schafer2017} for examples in the Gaussian case.)

\section{Gibbs sampler for Dirichlet process mixture model\label{app:dpmgibbs}}

The model in Section \ref{sec:dpm} can be fitted using a Gibbs sampler with some Metropolis-Hasting steps. 

This requires additional notation. We introduce cluster indicators $\bc_i = (c_i^{(1)},\ldots,c_i^{(n)})$, such that $c_i^{(j)}$ indicates the cluster to which $\epsilon_i^{(j)}$ belongs. We denote by $n_i$ the number of clusters (i.e., the number of unique entries of $\bc_i$).
Further, let $\tilde\bfmu_i$ and $\tilde\bd_i$ contain the $n_i$ unique cluster-specific parameters in $(\mu_i^{(1)},\ldots,\mu_i^{(n)})$ and $(d_i^{(1)},\ldots,d_i^{(n)})$, respectively; for example, we have $\mu_i^{(j)} = \tilde\bfmu_{i,c_i^{(j)}}$.

For each cluster $k=1,\ldots,n_i$, consider the within-cluster index set $\mathcal{J}_{i,k} = \{l: c_i^{(l)} = k\}$, the cluster size $n_{i,k} = |\mathcal{J}_{i,k}|$, and the cluster average $\bar\epsilon_{i,k} = (1/n_{i,k})\sum_{l \in \mathcal{J}_{i,k}} \epsilon_i^{(l)}$. 
Then, define
$\tilde\xi_{i,k} = (\eta_i \xi_i + n_{i,k} \bar\epsilon_{i,k})/\tilde\eta_{i,k}$,
$\tilde\eta_{i,k}= \eta_{i} + n_{i,k}$,
$\tilde\alpha_{i,k}= \alpha_{i} + n_{i,k}/2$, and
$\tilde\beta_{i,k} = \beta_i + (1/2) \sum_{l \in \mathcal{J}_{i,k}} (\epsilon_i^{(l)} - \bar\epsilon_{i,k})^2 + (1/2) n_{i,k} \eta_i (\bar\epsilon_{i,k} -\xi_i)^2/\tilde\eta_{i,k}$.
Further, let $\tilde\xi_{i,k}^{(-j)}$, $\tilde\eta_{i,k}^{(-j)}$, $\tilde\alpha_{i,k}^{(-j)}$, and $\tilde\beta_{i,k}^{(-j)}$ be the corresponding quantities computed without $\epsilon_i^{(j)}$.

We propose a Markov chain Monte Carlo (MCMC) procedure that, after initialization, cycles through the following steps for a large number of iterations:
\begin{enumerate}
    \item For $i=1,\ldots,N$, sample $\bfepsilon_i | \bY, \tilde\bfmu_i,\tilde\bd_i,\bc_i,\bftheta$ from $\normal_n(\hat\bfepsilon_i,\bD_i-\bD_i\bG_i^{-1}\bD_i)$, where $\hat\bfepsilon_i = \bfmu_i + \bD_i \bG_i^{-1} (\by_i-\bfmu_i)$, $\bG_i = \bC_i + \bD_i$, $\bD_i = \diag((d_i^{(1)})^2,\ldots,(d_i^{(n)})^2)$, and $\bC_i =C_i(\bY_{1:i-1},\bY_{1:i-1})$ depends on $\bftheta$.
    \item For $i=1,\ldots,N$, sequentially sample $c_i^{(1)},\ldots,c_i^{(n)}$ from $p(\bc_i | \bfepsilon_i,\bftheta) \propto \prod_{j=1}^n p(c_i^{(j)}| \bc_i^{(-j)},\bfepsilon_i,\bftheta)$ with $\tilde\bfmu_i,\tilde\bd_i$ integrated out \citep{MacEachern1994,Neal2000}. Specifically, we have $P(c_i^{(j)} = k | \bc_i^{(-j)},\bfepsilon_i,\bftheta) \propto \frac{n^{(j)}_{i,k}}{n-1+\zeta_i} b_{i,k}^{(j)}$ for $k=1,\ldots,n_i+1$, where $n_{i,n_i+1} = \zeta_i$ and $b_{i,k}^{(j)} = p(\epsilon_i^{(j)}| \bfepsilon_i^{(-j)},\bftheta) = t_{2\tilde\alpha_{i,k}^{(-j)}}\big(\tilde\xi_{i,k}^{(-j)},\frac{\tilde\beta_{i,k}^{(-j)}(\tilde\eta_{i,k}^{(-j)}+1)}{\tilde\alpha_{i,k}^{(-j)} \tilde\eta_{i,k}^{(-j)}}\big)$ is the density of a nonstandardized $t$ distribution.
    \item For $i=1,\ldots,N$, sample $(\tilde\mu_{i,k},\tilde d_{i,k}^2)$ from $p(\tilde\bfmu_i,\tilde\bd_i^2| \bfepsilon_i, \bc_i,\bftheta) = \prod_{k=1}^{n_i} \nig(\tilde\mu_{i,k},\tilde d_{i,k}^2|\tilde\xi_{i,k},\tilde\eta_{i,k}, \tilde\alpha_{i,k},\tilde\beta_{i,k})$.
    \item Using Metropolis-Hastings, sample $\bftheta$ from 
    \begin{align*}
    & p(\bftheta|\{\by_i,\bc_i,\bfepsilon_i,\tilde\bfmu_i,\tilde\bd_i: i=1,\ldots,N\}) \allowbreak  
    = p(\bftheta_\zeta|\{\bc_i\}) \allowbreak p(\bftheta_\sigma,\theta_\gamma,\theta_q|\{\by_i,\bfepsilon_i\}) \allowbreak  p(\bftheta_d,\bftheta_\eta|\{\tilde\bfmu_i,\tilde\bd_i \}) \allowbreak \\
    & \quad \textstyle \propto \big( \prod_{i=1}^N \zeta_i^{n_i} \Gamma(\zeta_i)/\Gamma(n+\zeta_i)\big) \big( \prod_{i=1}^N \normal_n(\by_i - \bfepsilon_i | \bfzero,\bC_i) \big) \big( \prod_{i=1}^N \prod_{j=1}^{n_i} \nig(\tilde\mu_{i,k} \tilde d^2_{i,k} | \xi_i, \eta_i, \alpha_i, \beta_i) \big),
    \end{align*}
    where each of the product terms depend on different components of $\bftheta$, which we thus sample and accept/reject separately.
\end{enumerate}
Steps 1--3 can be carried out in parallel for $i=1,\ldots,N$; for Step 4, the terms for $i=1,\ldots,N$ can also be computed in parallel and then combined to obtain the acceptance probabilities for new values of $\bftheta$.

The posterior predictive distribution for a new observation $\by^\star$ can be written as
\[
\textstyle p(\by^\star|\bY) = \prod_{i=1}^N p(y_i^\star|\by_{1:i-1}^\star,\bY),
\]
for which each $p(y_i^\star|\by_{1:i-1}^\star,\bY)$ is approximated as a mixture of Gaussians based on the MCMC output from above. More precisely, given $L$ (thinned) samples of the state variables from the Gibbs sampler, we have
\[
\textstyle
p(y_i^\star|\by_{1:i-1}^\star,\bY) = (1/L) \sum_{l=1}^L \sum_{k=1}^{n_i+1} w_{i,k}^{(l)} \normal(y_i^\star| f_i^{(l)}(\by^\star_{1:i-1}) + \tilde\mu_{i,k}^{(l)}, v_i^{(l)}(\by^\star_{1:i-1})+ (\tilde d_{i,k}^{(l)})^2 ),
\]
where $w_{i,k}^{(l)} = n_{i,k}^{(l)}/(n_i^{(l)} + \zeta_i^{(l)})$,
$f_i^{(l)}(\by^\star_{1:i-1}) = C_i^{(l)}(\by^\star_{1:i-1},\bY_{1:i-1})(\bG_i^{(l)})^{-1}\by_i$, $v_i^{(l)}(\by^\star_{1:i-1}) = C_i^{(l)}(\by^\star_{1:i-1},\by^\star_{1:i-1}) \allowbreak - \allowbreak C_i^{(l)}(\by^\star_{1:i-1},\bY_{1:i-1}) \allowbreak (\bG_i^{(l)})^{-1} \allowbreak  C_i^{(l)}(\bY_{1:i-1},\by^\star_{1:i-1})$, and $(\tilde\mu_{i,n_i+1}^{(l)}, (\tilde d_{i,n_i+1}^{(l)})^2) \sim \nig(\xi_i^{(l)}, \eta_i^{(l)}, \alpha_i^{(l)}, \beta_i^{(l)})$.

\section{Additional simulation study\label{app:simprod}}

Adding to the simulation study in Section \ref{sec:simstudy}, we considered a simulation scenario that deviated more strongly from a simple Mat\'ern GP. Specifically, we simulated the data as a product of a GP as in LR900 (exponential covariance on a regular $30 \times 30$ grid) and the (deterministic) function $\sin((s_1+s_2-1)/.05)$, where $(s_1,s_2)$ are the $(x,y)$-coordinates corresponding to a location or grid point $\bs$. The resulting realizations (Figure \ref{fig:simprod_illus}) mimic that of a process with diagonal advection.

\begin{figure}
\centering
~
\hfill	
	\begin{subfigure}{.3\textwidth}
	\centering
  	\includegraphics[trim=0mm 0mm 10mm 0mm, clip, width =.99\linewidth]{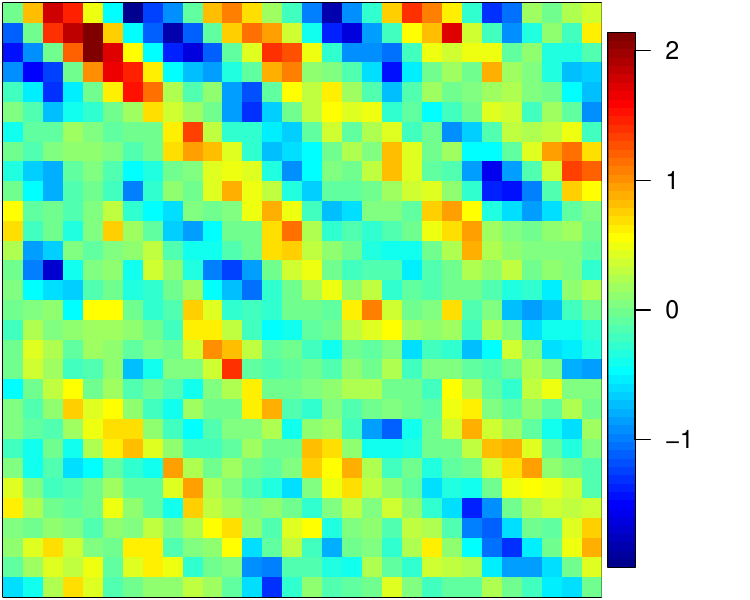}
	\caption{Simulated data}
	\label{fig:simprod_illus}
	\end{subfigure}%
\hfill	
	\begin{subfigure}{.26\textwidth}
	\centering
 	\includegraphics[width =.99\linewidth]{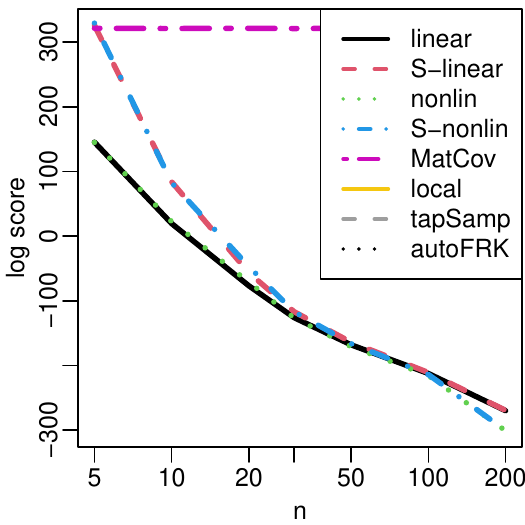}
	\caption{Log-scores}
	\label{fig:ls_prod}
	\end{subfigure}%
\hfill 
~
	\caption{
Results for simulated data based on product of exponential GP and sine function. In (b), \texttt{local}, \texttt{tapSamp}, and \texttt{autoFRK} are not shown because their log-scores were too high.
}
\label{fig:simprod}
\end{figure}

Figure \ref{fig:ls_prod} shows the resulting average log-scores (lower is better) for various training ensemble sizes $n$. \texttt{nonlin} was as or more accurate than all other methods for all $n$.

\section{Spatial prediction\label{app:pred}}

Our method can in principle be used for spatial prediction, but the resulting predictions are unlikely to be useful at spatial locations that are completely unobserved, because the posterior predictive distribution at those locations would simply be the prior predictive distribution. (For simplicity, we assume in this section that $\bftheta$ is known or has been estimated.) As mentioned in Section \ref{sec:conclusions}, our method could be modified to shrink toward a specific parametric covariance and hence to produce more meaningful prior predictive distributions at unobserved locations, which could be accomplished by extending the approach in \citet[][Sect.~R1--R2]{Kidd2020}.

However, our current method can produce competitive predictions at locations that are partially observed. Assume that the $n \times N$ data matrix $\bY$ is fully observed, and we also have a vector $\by^*$ that is partially observed. We would like to predict the missing entries of $\by^*$. The most straightforward way to do this is to order the partially unobserved spatial locations last \citep{Katzfuss2018,Schafer2020}, and so we assume that $\by^*_{1:N_o}$ is observed and $\by^*_{N_o+1:N}$ is missing. The posterior predictive distribution can then be seen from the results in Section \ref{app:proofs} to be
\[
\textstyle p( \by^*_{(N_o+1):N} |\by^*_{1:N_o},\bY) = \prod_{i=N_o+1}^N p(y_i^*|\by^*_{1:i-1},\bY) = \prod_{i=N_o+1}^N t_{2\tilde\alpha_i}\big(y_i^\star\big|\hat f_i(\by^\star_{1:i-1}), \hat d_i^2( v_i(\by^\star_{1:i-1})+1 )\big),
\]
where the $t$ distributions are exactly as in \eqref{eq:tstar}.

To gauge the accuracy of predictions at partially observed locations, we carried out a comparison on the Americas climate data similar to the comparison in Section \ref{sec:application} and Figure \ref{fig:pre4}. We again assumed that $n$ replicates were (fully) observed, but now we assumed that for held-out test data, the first half (i.e., $N_o = N/2 = 1{,}369$) locations were observed, and the remaining $N/2$ locations were unobserved and to be predicted. We evaluated the accuracy of the joint predictive distribution for the held-out test data using the log-score. Interestingly, this is equivalent to simply considering only some of the summands in the sum making up the log-score for the full distribution considered in Figure \ref{fig:pre4}. 
Specifically, the log-score in the partially observed setting is
$\log p( \by^*_{(N_o+1):N} |\by^*_{1:N_o},\bY) = \sum_{i=N_o+1}^N \log p(y_i^*|\by^*_{1:i-1},\bY)$, while the log-score for the entire distribution (i.e., for Figure \ref{fig:pre4}) is
$\log p( \by^* |\bY) = \sum_{i=1}^N \log p(y_i^*|\by^*_{1:i-1},\bY)$.
Thus it is not surprising that the relative results for the partially observed setting in Figure \ref{fig:precipcomp_ho} were similar to those in Figure \ref{fig:pre4}.

\begin{figure}
\centering
\includegraphics[width =.3\linewidth]{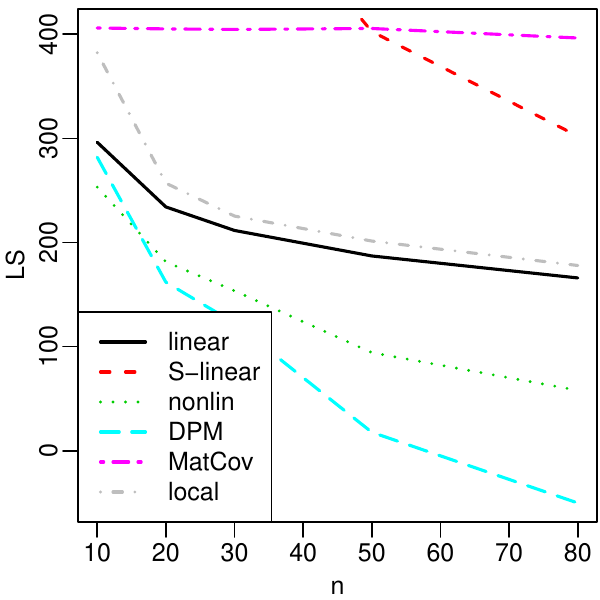}
\caption{
For precipitation anomalies, comparison of log-score (LS) for estimated joint distribution on a hold-out set consisting of half the locations, as a function of ensemble size $n$. As in Figure \ref{fig:pre4}, several competing methods are not shown because their LS were too high.}
\label{fig:precipcomp_ho}
\end{figure}

\section{Dimension reduction and reconstruction\label{app:recon}}

As mentioned in Section \ref{sec:spatialinference}, our method can be employed as a nonlinear spatial version of principal component analysis (PCA), which is commonly used for dimension reduction. In the environmental and climate sciences, principal components are highly popular and referred to as empirical orthogonal functions \citep[e.g., see][for a review]{Hannachi2007}. 
We compared the dimension-reduction and reconstruction accuracy of our method to regular (linear) PCA on the Americas climate data described in Section \ref{sec:application}. The methods were both fitted on $n$ fully observed training samples and were then allowed to extract $k$ numbers from each test sample or image; PCA stored the first $k$ PC scores, while our method stored the first $k$ map coefficients $\bz^*_{1:k}$ (see Section \ref{sec:spatialinference}). 

Based on the $k$ numbers, the task was to provide a probabilistic prediction of the reconstructed test field, whose quality was evaluated via the log-score. For PCA, the distribution was taken to be Gaussian whose mean was equal to the point prediction and whose covariance matrix was $\tau^2\bI_N$, where $\tau^2$ was the average squared difference between the point prediction and the test field. We considered a range of values for $n$. We set $k=n$, because it is not possible to estimate more than $n$ PCs from $n$ training samples. (Note that our method does not suffer from this limitation.)
As shown in Figure \ref{fig:precipcomp_recon}, our method strongly outperformed PCA. As an added advantage, our method only needs to access $n$ data points (the first $n$ in maximin ordering) from each test image to compute the first $n$ map coefficients, whereas PCA needs access to the full image to compute PC scores.

\begin{figure}
\centering
\includegraphics[width =.3\linewidth]{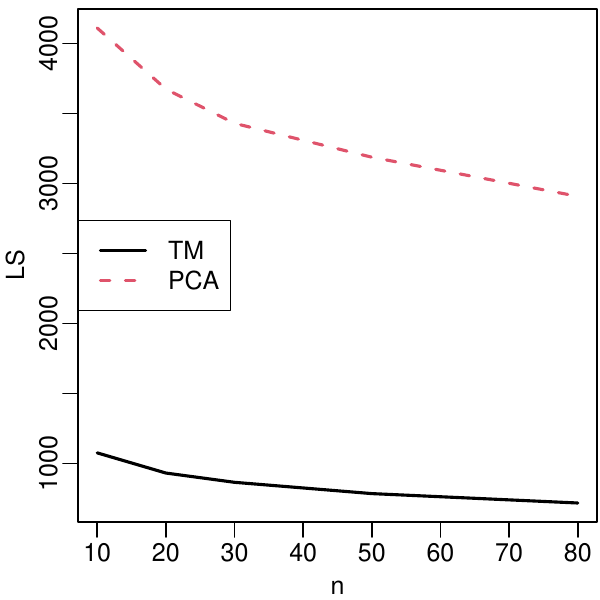}
\caption{
Reconstruction accuracy for test images in terms of log-score (LS) based on $n$ training samples and reducing the test images to $k=n$ principal components, for our nonlinear transport map (TM) versus standard principal component analysis (PCA).}
\label{fig:precipcomp_recon}
\end{figure}

\section{Comparison to VAE\label{app:vae}}

As stated in Section \ref{sec:intro}, our approach
can be viewed as a Bayesian shallow autoencoder, with the posterior transport map and its inverse acting as the encoder and decoder, respectively. We implemented a (deep) VAE \citep{Kingma2014} as a comparison method. 
We used four 2D-convolutional layers for both the encoder and the decoder, summing up to a total of 150,723 model parameters for the Americas climate data. For optimization, we used the Adam optimizer with a learning rate of $10^{-4}$ and a cosine-annealing scheduler. The training stopped after 500 epochs.

We compared samples from the fitted VAE and transport map, as shown in Figure \ref{fig:vaecomp_GP} for a GP with exponential covariance and in Figure \ref{fig:vaecomp} for the climate application, all based on training ensemble sizes around 100. 
Plots within each row are not meant to be similar to each other. Instead, for each figure, each individual panel should be an independent sample from the same distribution. Thus, the quality of the methods should be assessed via a kind of Turing test: If one randomly permuted, for example, the plots in the first two columns of each figure, we would argue that it would be difficult to tell which samples were from the exact distribution and which were from the transport map.
In contrast, the VAE samples are clearly different; the variance of the VAE samples is too small, and the spatial features are heavily blurred, which is a known issue for VAEs \citep[e.g.,][Ch.~20.10.3]{Goodfellow2016}.

We also attempted to include a comparison to a GAN designed for climate-model output \citep{Besombes2021}, but we were not able to obtain useful results for the small sample sizes considered here.

\begin{figure}
\centering
\includegraphics[width =.99\linewidth]{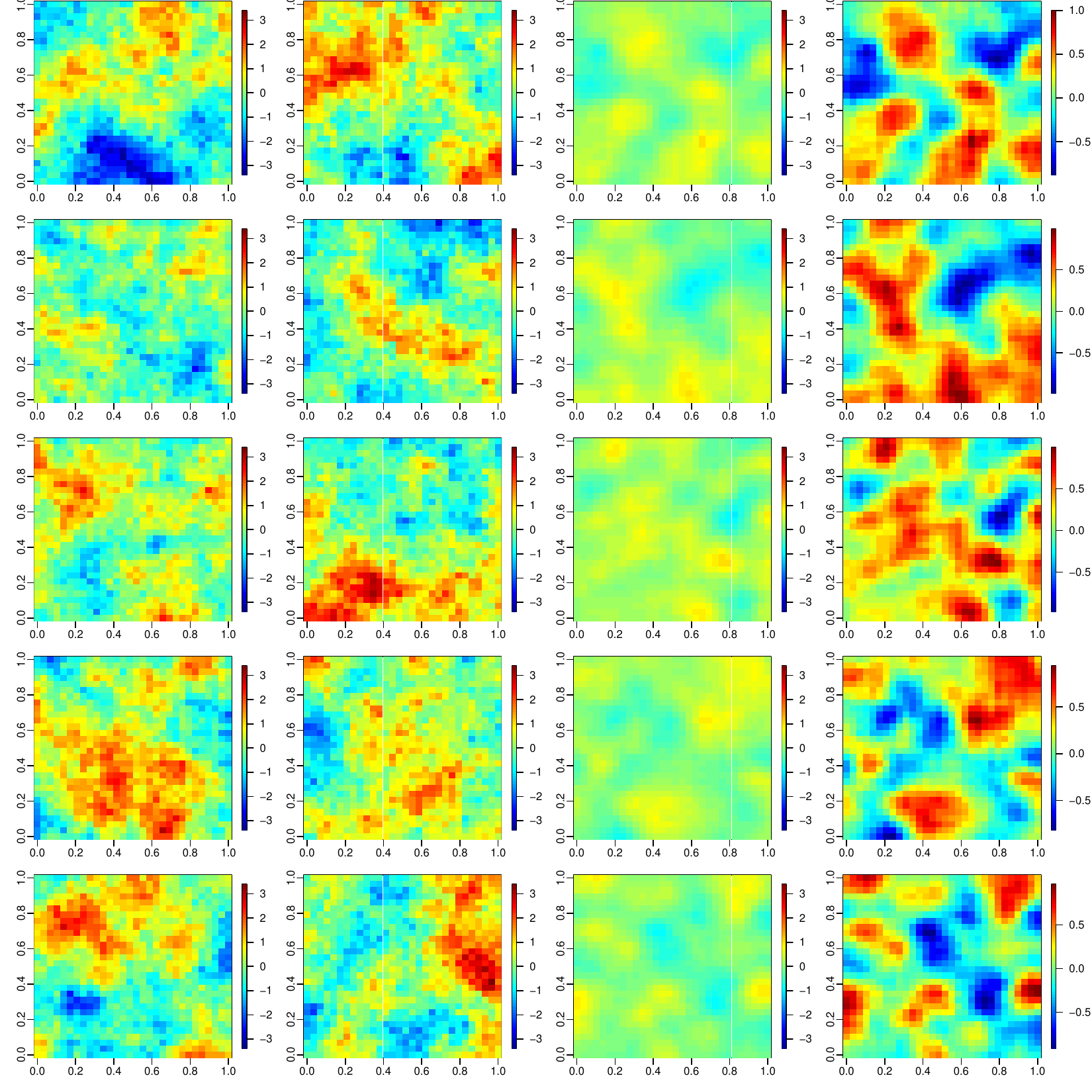}
\caption{For simulated data in the LR900 scenario described in Section \ref{sec:simstudy}: Left (first) column: Five samples from the exact GP model. Second: Samples from our fitted transport-map model (\texttt{nonlin}). Third: Samples from a fitted VAE. Fourth: Same VAE samples but on a different color scale with values roughly between -1 and +1. (The first three columns are on the same color scale.)}
\label{fig:vaecomp_GP}
\end{figure}

\begin{figure}
\centering
\includegraphics[width =.99\linewidth]{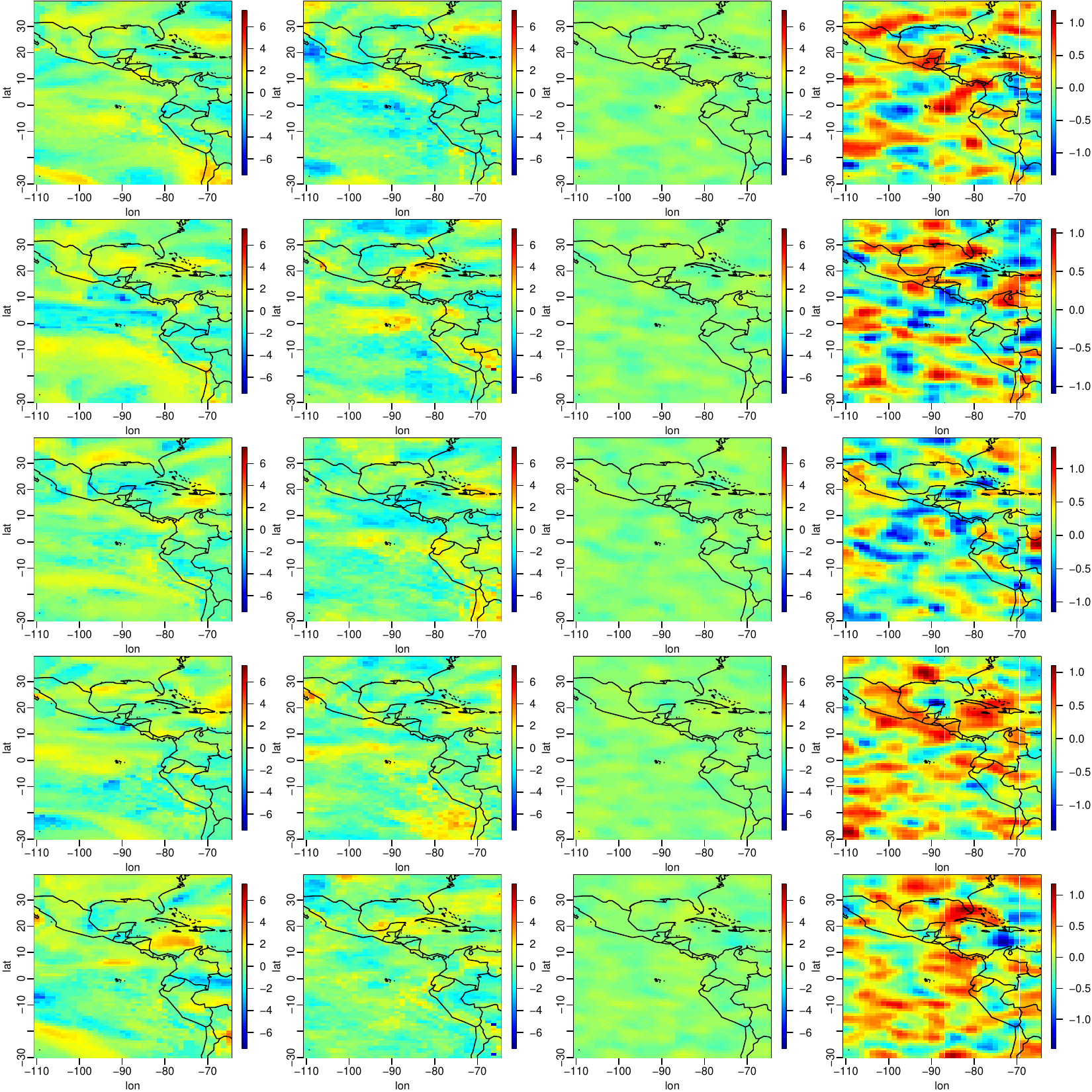}
\caption{For precipitation anomalies in the Americas subregion produced by a climate model (see Section \ref{sec:application}): Left (first) column: Five samples from the climate model. Second: Samples from our fitted transport-map model (\texttt{nonlin}). Third: Samples from a fitted VAE. Fourth: Same VAE samples but on a different color scale with values roughly between -1.5 and +1.5. (The first three columns are on the same color scale.)}
\label{fig:vaecomp}
\end{figure}

\bibliographystyle{apalike}
\bibliography{mendeley,additionalrefs}

\end{document}